\documentclass[journal]{IEEEtran}

\usepackage{float}
\floatstyle{ruled}
\newfloat{model}{thp}{lop}
\floatname{model}{Model}

\usepackage{multicol}

\usepackage{xcolor}

\newcommand{\amdq}{Dell PowerEdge R415 servers with Dual 2.8GHz AMD 6-Core Opteron 4184 CPUs and 64GB of memory}

\usepackage{xspace}

\usepackage{url}
\usepackage[cmex10]{amsmath}
\usepackage{amssymb,amsthm,bm,mathtools,mathabx}
\usepackage[pdftex]{graphicx}
\usepackage{authblk}

\usepackage{hyperref} 
\hypersetup{
breaklinks=true,
urlcolor= blue, 
linkcolor= blue,
citecolor=red, 
}

\newtheorem{theorem}{Theorem}[section]
\newtheorem{lemma}[theorem]{Lemma}

\newtheorem{corollary}[theorem]{Corollary}

\begin{document}

\title{The QC Relaxation: A Theoretical and Computational Study on Optimal Power Flow} 

\author{Carleton~Coffrin,
        Hassan~L.~Hijazi,
        and~Pascal~Van~Hentenryck
\thanks{All authors are members of the Optimisation Research Group, NICTA,
ACT 2601 Australia; and affiliated with the College of Engineering and Computer Science, Australian National University, ACT 0200, Australia.}}
\maketitle

\begin{abstract}
Convex relaxations of the power flow equations and, in
  particular, the Semi-Definite Programming (SDP) and Second-Order
  Cone (SOC) relaxations, have attracted significant interest in
  recent years. The Quadratic Convex (QC) relaxation is a departure
  from these relaxations in the sense that it imposes constraints to
  preserve stronger links between the voltage variables through convex
  envelopes of the polar representation.  This paper is a systematic
  study of the QC relaxation for AC Optimal Power Flow with realistic
  side constraints. The main theoretical result shows that the QC
  relaxation is stronger than the SOC relaxation and neither dominates
  nor is dominated by the SDP relaxation. In addition, comprehensive
  computational results show that the QC relaxation may produce
  significant improvements in accuracy over the SOC relaxation at a
  reasonable computational cost, especially for networks with tight
  bounds on phase angle differences. The QC and SOC relaxations are
  also shown to be significantly faster and reliable compared to the
  SDP relaxation given the current state of the respective solvers.
\end{abstract}

\begin{IEEEkeywords}
Optimization Methods, Convex Quadratic Optimization, Optimal Power Flow
\end{IEEEkeywords}

\IEEEpeerreviewmaketitle

\vspace{-0.2cm}
\section*{Nomenclature}
\addcontentsline{toc}{section}{Nomenclature}
\begin{IEEEdescription}[\IEEEusemathlabelsep\IEEEsetlabelwidth{$Y^s = g^s + \bm i$}]
  \item [{$N$}]  - The set of nodes in the network 
  \item [{$E$}]  - The set of {\em from} edges in the network 
  \item [{$E^R$}]  - The set of {\em to} edges in the network 
  %
  \item [{$\bm i$}] - imaginary number constant
  \item [{$I$}] - AC current
  \item [{$S = p+ \bm iq$}] - AC power
  \item [{$V = v \angle \theta$}]  - AC voltage
  \item [{$Z = r+ \bm ix$}] - Line impedance
  \item [{$Y = g + \bm ib$}]  - Line admittance
  \item [{$T = t \angle \theta^t$}]  - Transformer properties
  \item [{$Y^s = g^s + \bm ib^s$}]  - Bus shunt admittance
  \item [{$W $}]  - Product of two AC voltages
  \item [{$l$}]  - Current magnitude squared, $|I|^2$
  %
  \item [{$b^c$}] - Line charging
  \item [{$s^u$}] - Line apparent power thermal limit
  \item [{$\theta^\Delta$}] - Phase angle difference limit
  \item [{$S^d = p^d+ \bm iq^d$}] - AC power demand
  \item [{$S^g = p^g+ \bm iq^g$}] - AC power generation
  \item [{$c_0,c_1,c_2$}] - Generation cost coefficients 
 %
   \item [{$\Re(\cdot)$}] - Real part of a complex number
   \item [{$\Im(\cdot)$}] - Imaginary part of a complex number
   \item [{$(\cdot)^*$}] - Conjugate of a complex number
   \item [{$|\cdot|$}] - Magnitude of a complex number, $l^2$-norm
  %
  %
  \item [{$x^l, x^u$}] - Lower and upper bounds of $x$, respectively
  \item [{$\widecheck{x}$}] - Convex envelope of $x$
  \item [{$\bm x$}] - A constant value
\end{IEEEdescription}

\section{Introduction}
\label{sec:intro}

\IEEEPARstart{C}{onvex} relaxations of the power flow equations have attracted
significant interest in recent years. They include the Semi-Definite
Programming (SDP) \cite{Bai2008383}, Second-Order Cone (SOC)
\cite{Jabr06}, Convex-DistFlow (CDF) \cite{6102366}, and the recent
Quadratic Convex (QC) \cite{QCarchive} and Moment-Based \cite{7038397,6980142} relaxations. Much of the
excitement underlying this line of research comes from the fact that
the SDP relaxation has shown to be tight on a variety of case studies
\cite{5971792}, opening a new avenue for accurate, reliable, and
efficient solutions to a variety of power system applications. Indeed,
industrial-strength optimization tools (e.g., Gurobi,
cplex, Mosek) are now available to solve
various classes of convex optimization problems. 

The relationships between the SDP, SOC, and CDF relaxations is now
largely well-understood: See \cite{6756976,6815671} for a
comprehensive overview. In particular, the SOC and CDF relaxations are
known to be equivalent and the SDP relaxation is at least as strong
than both of these. However, little is known about the QC relaxation
which is a significant departure from these more traditional
relaxations. Indeed, one of the key features of the QC relaxation is
to compute convex envelopes of the polar representation of the power
flow equations in the hope of preserving stronger links between the
voltage variables. This contrasts with the SDP and SOC relaxations
which are derived from a lift-and-project approach on the complex
representation.

This paper fills this gap and provides a theoretical study of the QC
relaxation as well as a comprehensive computational evaluation to compare
the strengths and weaknesses of these relaxations. Our main
contributions can be summarized as follows:

\begin{enumerate}
\item The QC relaxation is stronger than the SOC relaxation.

\item The QC relaxation neither dominates nor is dominated by the SDP
  relaxation.

\item Computational results on optimal power flow show that the QC
  relaxation may bring significant benefits in accuracy over the SOC
  relaxation, especially for tight bounds on phase angle differences,
  for a reasonable loss in efficiency.

\item The computational results also show that, with existing
  solvers, the SOC and QC relaxations are significantly faster and more
  reliable than the SDP relaxation.
\end{enumerate}

\noindent
The theoretical results are derived using the equivalence of two
classes of second-order cone constraints (in conjunction with the
power equations), which provides an alternative formulation for the QC
model which is interesting in its own right. Moreover, to the best of
our knowledge, the computational results also represent the most
comprehensive comparison of these convex relaxations. They are
obtained for optimal power flow problems with realistic
side-constraints, featuring bus shunts, line charging, and
transformers.

The rest of the paper is organized as follows.  Section
\ref{sec:ac:pf} reviews the formulation of the AC-OPF problem from
first principles and presents two equivalent formulations of this
non-convex optimization problem.  Section \ref{sec:relaxations}
derives the SDP, QC, and SOC relaxations. Section \ref{sec:example}
illustrates their behavior on a well-known 3-bus example. Section
\ref{sec:qc_alt} presents an alternative formulation of the QC
relaxation which is a convenient tool for subsequent proofs. Section
\ref{sec:relations} presents the theoretical results linking the QC to
the other relaxations.  Section \ref{sec:expar} reports the
computational results for the three relaxations on 93 AC-OPF test
cases, and Section \ref{sec:conclusion} concludes the paper.

\section{AC Optimal Power Flow}
\label{sec:ac:pf}

This section reviews the specification of AC Optimal Power Flow (AC-OPF)
and introduces the notations used in the paper. In the equations,
constants are always in bold face. The AC power flow equations are
based on complex quantities for current $I$, voltage $V$, admittance
$Y$, and power $S$, which are linked by the physical properties of
Kirchhoff's Current Law (KCL), i.e.,
\begin{align}
& I^g_i - {\bm I^d_i} = \sum_{\substack{(i,j)\in E \cup E^R}} I_{ij} 
\end{align}
Ohm's Law, i.e.,
\begin{align}
& I_{ij} = \bm Y_{ij} (V_i - V_j)  \label{current_flow}
\end{align}
and the definition of AC power, i.e.,
\begin{align}
& S_{ij} = V_{i}I_{ij}^* \label{complex_power}
\end{align}
Combining these three properties yields the AC Power Flow equations, i.e.,
\begin{subequations}
\begin{align}
& S^g_i - {\bm S^d_i} = \sum_{\substack{(i,j)\in E \cup E^R}} S_{ij} \;\; \forall i\in N \\ 
& S_{ij} = \bm Y^*_{ij} V_i V^*_i - \bm Y^*_{ij} V_i V^*_j \;\; (i,j)\in E \cup E^R
\end{align}
\end{subequations}

\noindent
These non-convex nonlinear equations define how power flows in the
network and are a core building block in many power system
applications. However, practical applications typically include various
operational side constraints on the power flow. We now review some 
of the most significant ones.

\paragraph*{Generator Capabilities}

AC generators have limitations on the amount of active and reactive
power they can produce $S^g$, which is characterized by a generation
capability curve \cite{9780070359581}.  Such curves typically define
nonlinear convex regions which are typically approximated by boxes in
AC transmission system test cases, i.e.,
\begin{subequations}
\begin{align}
& \bm {S^{gl}}_i \leq S^g_i \leq \bm {S^{gu}}_i \;\; \forall i \in N 
\end{align}
\end{subequations}

\paragraph*{Line Thermal Limit}

AC power lines have thermal limits \cite{9780070359581} to prevent
lines from sagging and automatic protection devices from activating.
These limits are typically given in Volt Amp units and constrain 
the apparent power flows on the lines, i.e.,
\begin{align}
& |S_{ij}| \leq \bm {s^u}_{ij} \;\; \forall (i,j) \in E \cup E^R 
\end{align}

\paragraph*{Bus Voltage Limits}

Voltages in AC power systems should not vary too far (typically $\pm
10\%$) from some nominal base value \cite{9780070359581}.  This is
accomplished by putting bounds on the voltage magnitudes, i.e.,
\begin{align}
& \bm {v^l}_i \leq |V_i| \leq \bm {v^u}_i \;\; \forall i \in N
\end{align}
A variety of power flow formulations only have variables for the
square of the voltage magnitude, i.e., $|V_i|^2$.  In such cases, the
voltage bound constrains can be incorporated via the following constraints:
\begin{align}
& ( \bm {v^l}_{i} )^2 \leq |V_i|^2 \leq ( \bm {v^u}_{i} )^2 \;\; \forall i \in N
\end{align}

\paragraph*{Phase Angle Differences}

Small phase angle differences are also a design imperative in AC power
systems \cite{9780070359581} and it has been suggested that phase
angle differences are typically less than $10$ degrees in practice
\cite{Purchala:2005gt}. These constraints have not typically been
incorporated in AC transmission test cases \cite{matpower}. However,
recent work \cite{LPAC_ijoc,QCarchive} have observed that
incorporating Phase Angle Difference (PAD) constraints, i.e.,
\begin{align}
&  -\bm {\theta^\Delta}_{ij} \leq \angle \! \left( V_i V^*_j \right) \leq \bm {\theta^\Delta}_{ij} \;\; \forall (i,j) \in E \label{eq:pad_1}
\end{align}
is useful in the convexification of the AC power flow equations. For
simplicity, this paper assumes that the phase angle difference bounds
are symmetrical and within the range $(- \pi/2, \pi/2 )$, i.e.,
\begin{align}
& 0 \leq \bm {\theta^{\Delta}}_{ij} \leq \frac{\pi}{2} \;\; (i,j) \in E
\end{align}
but the results presented here can be extended to more general
cases. Observe also that the PAD constraints \eqref{eq:pad_1} can be
implemented as a linear relation of the real and imaginary components
of $V_iV^*_j$ \cite{6810520}, i.e. $\forall (i,j) \in E$,
\begin{align}
& \tan(-\bm {\theta^\Delta}_{ij}) \Re\left(V_iV^*_j\right) \! \leq \!  \Im\left(V_iV^*_j\right) \! \leq \! \tan(\bm {\theta^\Delta}_{ij}) \Re\left(V_iV^*_j\right) \label{eq:w_pad}
\end{align}
The usefulness of this formulation will be apparent later in the
paper.

\paragraph*{Other Constraints}
Other line flow constraints have been proposed, such as, active power
limits and voltage difference limits \cite{5971792,6810520}.  However,
we do not consider them here since, to the best of our knowledge, test
cases incorporating these constraints are not readily available.

\paragraph*{Objective Functions}

The last component in formulating OPF problems is an objective
function. The two classic objective functions are line loss
minimization, i.e.,
\begin{align}
& \mbox{minimize: } \sum_{i \in N} \Re(S^g_i)  \label{eq:loss_min}
\end{align}
and generator fuel cost minimization, i.e.,
\begin{align}
& \mbox{minimize: } \sum_{i \in N} \bm c_{2i} (\Re(S^g_i))^2 + \bm c_{1i}\Re(S^g_i) + \bm c_{0i} \label{eq:fule_min}
\end{align}
Observe that objective \eqref{eq:loss_min} is a special case of
objective \eqref{eq:fule_min} where $\bm c_{2i}\!=\!0, \bm
c_{1i}\!=\!1, \bm c_{0i}\!=\!0 \;\; (i \! \in \! N)$ \cite{6153415}.
Hence, the rest of this paper focuses on objective
\eqref{eq:fule_min}.

\paragraph*{AC-OPF}

Combining the AC power flow equations, the side constraints, and the
objective function, yields the well-known AC-OPF formulation presented
in Model \ref{model:ac_opf}. Observe that, in Model
\ref{model:ac_opf}, the non-convexities arises solely from the product
of the voltages (i.e., $V_i V_j^*$) and they can be isolated by
introducing new $W$ variables to represent the products of $V$s
\cite{780924, Jabr06, 4548149, 6345272}, i.e,
\begin{align}
& V_i V_j^* = W_{ij} \;\; (i,j \in N).
\end{align}
Model \ref{model:ac_opf_w} presents an equivalent version of the
AC-OPF, where the $W$ factorization has been incorporated and the only
source of non-convexity is in constraint \eqref{w_2}.  
Note that this section has introduced the simplest form of the AC-OPF problem and that real-world applications feature a variety of extensions as discussed at length in \cite{Capitanescu20111731,real_opf}.  In practice, this non-convex nonlinear optimization problem is typically solved with numerical methods (e.g. IPM, SLP) \cite{744492,744495}, which provide locally optimal solutions if they converge to a feasible point.

\begin{model}[t]
\caption{AC-OPF}
\label{model:ac_opf}
\begin{subequations}
\vspace{-0.2cm}
\begin{align}
\mbox{\bf variables: } & S^g_i (\forall i\in N), \; V_i (\forall i\in N)  \nonumber \\
%
\mbox{\bf minimize: } & \sum_{i \in N} \bm c_{2i} (\Re(S^g_i))^2 + \bm c_{1i}\Re(S^g_i) + \bm c_{0i} \label{ac_obj} \\
\mbox{\bf subject to:} \nonumber
\end{align}
\vspace{-0.7cm}
\begin{align}
\phantom{1234} & \bm {v^l}_i \leq |V_i| \leq \bm {v^u}_i \;\; \forall i \in N \label{ac_1} \\
& \bm {S^{gl}}_i \leq S^g_i \leq \bm {S^{gu}}_i \;\; \forall i \in N \label{ac_2}  \\
& |S_{ij}| \leq \bm {s^u}_{ij} \;\; \forall (i,j) \in E \cup E^R \label{ac_5}  \\
& S^g_i - {\bm S^d_i} = \sum_{\substack{(i,j)\in E \cup E^R}} S_{ij} \;\; \forall i\in N \label{ac_3}  \\ 
& S_{ij} = \bm Y^*_{ij} V_i V^*_i - \bm Y^*_{ij} V_i V^*_j \;\; (i,j)\in E \cup E^R \label{ac_4}  \\
& -\bm {\theta^\Delta}_{ij} \leq \angle (V_i V^*_j) \leq \bm {\theta^\Delta}_{ij} \;\; \forall (i,j) \in E  \label{ac_6} 
\end{align}
\end{subequations}
\end{model}

\begin{model}[t]
\caption{ AC-OPF-W}
\label{model:ac_opf_w}
\begin{subequations}
\vspace{-0.2cm}
\begin{align}
&\mbox{\bf variables: } S^g_i (\forall i\in N), \; V_i (\forall i\in N), \; W_{ij} (\forall i,j \in N) \phantom{123}  \nonumber \\
%
&\mbox{\bf minimize: } \sum_{i \in N} \bm c_{2i} (\Re(S^g_i))^2 + \bm c_{1i}\Re(S^g_i) + \bm c_{0i} \label{w_obj} \\
&\mbox{\bf subject to: } \nonumber 
\end{align}
\vspace{-0.7cm}
\begin{align}
%
\phantom{123} &  W_{ij} = V_iV_j^* \;\; \forall i \in N,  \forall j \in N \label{w_2} \\
& (\bm {v^l}_i)^2 \leq W_{ii} \leq (\bm {v^u}_i)^2 \;\; \forall i \in N \label{w_3} \\
& \bm {S^{gl}}_i \leq S^g_i \leq \bm {S^{gu}}_i \;\; \forall i \in N \label{w_4} \\
& S^g_i - {\bm S^d_i} = \sum_{\substack{(i,j)\in E \cup E^R}} S_{ij} \;\; \forall i\in N \label{w_5} \\ 
& S_{ij} = \bm Y^*_{ij} W_{ii} - \bm Y^*_{ij} W_{ij} \;\; (i,j)\in E \label{w_6} \\
& S_{ji} = \bm Y^*_{ij} W_{jj} - \bm Y^*_{ij} W_{ij}^* \;\; (i,j)\in E \label{w_7} \\
& |S_{ij}| \leq (\bm {s^u}_{ij}) \;\; \forall (i,j) \in E \cup E^R \label{w_8} \\
& \tan(-\bm {\theta^\Delta}_{ij}) \Re(W_{ij}) \leq \Im(W_{ij}) \leq \tan(\bm {\theta^\Delta}_{ij}) \Re(W_{ij}) \label{w_9} \\
& \hspace{5.5cm} \forall (i,j) \in E \nonumber
\end{align}
\end{subequations}
\end{model}

\section{Convex Relaxations of Optimal Power Flow}
\label{sec:relaxations}

Since the AC-OPF problem is NP-Hard \cite{verma2009power,ACSTAR2015} 
and numerical methods provide limited guarantees for determining feasibility and global optimally,
significant attention has been devoted to finding convex relaxations
of Model \ref{model:ac_opf}.  Such relaxations are appealing because
they are computationally efficient and may be used to:
\begin{enumerate}
\item bound the quality of AC-OPF solutions produced by locally optimal methods;
\item prove that a particular AC-OPF problem has no solution; 
\item produce a solution that is feasible in the original non-convex
  problem \cite{5971792}, thus solving the AC-OPF and guaranteeing
  that the solution is globally optimal.
\end{enumerate}
The ability to provide bounds is particularly important for the
numerous mixed-integer nonlinear optimization problems that arise in power
system applications.
For these reasons, a variety of convex relaxations of the AC-OPF have
been developed including, the SDP \cite{Bai2008383}, QC
\cite{QCarchive}, SOC \cite{Jabr06}, and Convex-DistFlow
\cite{6102366}, which are reviewed in detail in this
section. Moreover, since the SOC and Convex-DistFlow relaxations have
been shown to be equivalent \cite{6483453}, this paper focuses on the
SDP, SOC, and QC relaxations only and shows how they are derived from
Model \ref{model:ac_opf_w}.  The key insight is that each relaxation
presents a different approach to convexifing constraints \eqref{w_2},
which are the only source of non-convexity in Model
\ref{model:ac_opf_w}.

\paragraph*{The Semi-Definite Programming  (SDP) Relaxation} 

exploits the fact that the $W$ variables are defined by $V(V^*)^T$,
which ensures that $W$ is positive semi-definite (denoted by $W
\succeq 0$) and has rank 1 \cite{Bai2008383, 5971792, 6345272}.  These
conditions are sufficient to enforce constraints \eqref{w_2}
\cite{doi:10.1137/1038003}, i.e.,
\begin{equation}
W_{ij} = V_iV_j^* \; (i,j \in N) \;\; \Leftrightarrow \;\; W \succeq 0 \; \wedge \; \mbox{rank}(W) = 1 \nonumber
\end{equation}
The SDP relaxation \cite{sdpIntro,doi:10.1137/1038003} then drops the
rank constraint to obtain Model \ref{model:ac_opf_w_sdp}.

\begin{model}[t]
\caption{The SDP Relaxation AC-OPF-W-SDP.}
\label{model:ac_opf_w_sdp}
\begin{subequations}
\vspace{-0.2cm}
\begin{align}
\mbox{\bf variables: } & S^g_i (\forall i\in N), \; W_{ij} (\forall i,j \in N)  \nonumber \\
%
\mbox{\bf minimize: } & \eqref{w_obj} \nonumber \\
\mbox{\bf subject to: } & \mbox{\eqref{w_3}--\eqref{w_9}} \nonumber \\
& W \succeq 0 \label{w_sdp}
\end{align}
\end{subequations}
\end{model}

\begin{model}[t]
\caption{The SOC Relaxation AC-OPF-W-SOC.}
\label{model:ac_opf_w_soc}
\begin{subequations}
\vspace{-0.2cm}
\begin{align}
\mbox{\bf variables: } & S^g_i (\forall i\in N), \; W_{ij} (\forall (i,j)\in E), \; W_{ii} (\forall i \in N) : real   \nonumber \\
%
\mbox{\bf minimize: } & \eqref{w_obj} \nonumber \\
\mbox{\bf subject to: } & \mbox{\eqref{w_3}--\eqref{w_9}} \nonumber \\
& |W_{ij}|^2 \leq W_{ii}W_{jj} \;\; \forall (i,j)\in E \label{soc_1}
\end{align}
\end{subequations}
\end{model}

\paragraph*{The Second Order Cone (SOC) Relaxation} convexifies each
constraint of \eqref{w_2} separately, instead of considering them
globally as in the SDP relaxation. The SOC relaxation takes the
absolute square of each constraint, refactors it, and then relaxes the
equality into an inequality, i.e.,
\begin{subequations}
\begin{align}
& W_{ij} = V_iV^*_j \\
& W_{ij}W^*_{ij} = V_iV^*_jV^*_iV_j \\
& |W_{ij}|^2 = W_{ii}W_{jj} \\
& |W_{ij}|^2 \leq W_{ii}W_{jj} \label{w_soc}
\end{align}
\end{subequations}
%
Equation \eqref{w_soc} is a rotated second-order cone constraint which
is widely supported by industrial optimization tools. It can, in fact,
be rewritten in the standard form of a second-order cone constraint as,
\begin{equation}
 \left | \begin{pmatrix}
  2W_{ij}\\
  W_{ii} - W_{jj}\\
 \end{pmatrix} \right | \leq W_{ii} + W_{jj}  
\end{equation}
The complete SOC formulation is presented in Model
\ref{model:ac_opf_w_soc}.  Note that this relaxation requires fewer
$W$ variables than Model \ref{model:ac_opf_w_sdp}. Due to the
sparsity of AC power networks, this size reduction can lead to
significant memory and computational savings.

\paragraph*{The Quadratic Convex (QC) Relaxation}
was introduced to preserve stronger links between the voltage
variables \cite{QCarchive}.  It represents the voltages in polar
form (i.e., $V = v \angle \theta$) and links these real variables
to the $W$ variables, along the lines of \cite{780924,4548149,6661462,RomeroRamos2010562}, using the following equations:
\begin{subequations}
\begin{align}
& W_{ii} = v_{i}^2 \;\; i \in N \label{eq:w_link_1} \\
& \Re(W_{ij}) = v_{i}v_{j}\cos(\theta_i - \theta_j) \;\; \forall(i,j) \in E \label{eq:w_link_2} \\
& \Im(W_{ij}) = v_{i}v_{j}\sin(\theta_i - \theta_j) \;\; \forall(i,j) \in E \label{eq:w_link_3}
\end{align}
\end{subequations}
The QC relaxation then relaxes these equations by taking tight convex
envelopes of their nonlinear terms, exploiting the operational limits
for $v_i, v_j, \theta_i - \theta_j$. The convex envelopes for the
square and product of variables are well-known \cite{MacC76}, i.e.,
\begin{equation}
\tag{T-CONV}
\langle x^2 \rangle^T \equiv
\begin{cases*}
\widecheck{x}  \geq  x^2\\
\widecheck{x}  \leq  ( \bm {x^u} + \bm {x^l})x - \bm {x^u} \bm {x^l}
\end{cases*}
\end{equation}
\begin{equation*}
\tag{M-CONV}
\langle xy \rangle^M \equiv
\begin{cases*}
\widecheck{xy}  \geq  \bm {x^l}y + \bm {y^l}x - \bm {x^l}\bm {y^l}\\
\widecheck{xy}  \geq  \bm {x^u}y + \bm {y^u}x - \bm {x^u}\bm {y^u}\\
\widecheck{xy}  \leq  \bm {x^l}y + \bm {y^u}x - \bm {x^l}\bm {y^u}\\
\widecheck{xy}  \leq  \bm {x^u}y + \bm {y^l}x - \bm {x^u}\bm {y^l}
\end{cases*}
\end{equation*}
Under our assumptions that the phase angle bound satisfies $0 \leq \bm
{\theta^\Delta} \leq \frac{\pi}{2}$ and is symmetric, convex envelopes
for sine (S-CONV) and cosine (C-CONV) \cite{QCarchive} are given by,
\begin{equation*}
\langle \sin(x) \rangle^S \equiv
\begin{cases*}
\widecheck{sx} \leq \cos\left(\frac{\bm {x^u}}{2}\right)\left(x -\frac{\bm {x^u}}{2}\right) + \sin\left(\frac{\bm {x^u}}{2}\right)\\
\widecheck{sx} \geq \cos\left(\frac{\bm {x^u}}{2}\right)\left(x +\frac{\bm {x^u}}{2}\right) - \sin\left(\frac{\bm {x^u}}{2}\right)
\end{cases*}
\end{equation*}
\begin{equation*}
\langle \cos(x) \rangle^C \equiv
\begin{cases*}
\widecheck{cx}  \leq  1 - \frac{1-\cos({\bm {x^u}})}{({\bm {x^u}})^2} x^2\\
\widecheck{cx}  \geq \cos(\bm {x^u})
\end{cases*}
\end{equation*}
\noindent
In the following, we abuse notation and also use $\langle f(\cdot)
\rangle^{C}$ to denote the variable on the left-hand side of the
convex envelope $C$ for function $f(\cdot)$. When such an expression is used
inside an equation, the constraints $\langle f(\cdot) \rangle^{C}$ are
also added to the model.

\begin{model}[t]
\caption{The QC Relaxation AC-OPF-C-QC.}
\label{model:ac_opf_c_qc}
\begin{subequations}
\vspace{-0.2cm}
\begin{align}
\mbox{\bf variables: } & S^g_i (\forall i\in N), \; W_{ij} (\forall (i,j)\in E), \; W_{ii} (\forall i \in N) : real   \nonumber \\
& v_i \angle \theta_i (\forall i \in N), \;  l_{ij} (\forall (i,j)\in E) \nonumber \\
%
\mbox{\bf minimize: } & \eqref{w_obj} \nonumber \\
\mbox{\bf subject to: } & \mbox{\eqref{w_3}--\eqref{w_9}} \nonumber 
\end{align}
\vspace{-0.7cm}
\begin{align}
\phantom{1} & W_{ii} = \langle v_i^2 \rangle^T  \;\; i \in N \label{qc_1} \\
&\Re(W_{ij}) = \langle \langle v_i v_j \rangle^M \langle \cos(\theta_i - \theta_j) \rangle^C \rangle^M \;\; \forall(i,j) \in E \label{qc_2} \\
&\Im(W_{ij}) = \langle \langle v_i v_j \rangle^M \langle \sin(\theta_i - \theta_j) \rangle^S \rangle^M  \;\; \forall(i,j) \in E \label{qc_3} \\
& S_{ij} + S_{ji} = \bm Z_{ij} l_{ij} \;\; \forall(i,j) \in E \label{qc_4} \\
& |S_{ij}|^2 \leq W_{ii}l_{ij}  \;\; \forall(i,j) \in E \label{qc_5}
\end{align}
\end{subequations}
\end{model}

Convex envelopes for equations
\eqref{eq:w_link_1}--\eqref{eq:w_link_3} can be obtained by composing
the convex envelopes of the functions for square, sine, cosine, and
the product of two variables, i.e.,
\begin{subequations}
\begin{align}
& W_{ii} = \langle v_i^2 \rangle^T  \;\; i \in N \\
&\Re(W_{ij}) = \langle \langle v_i v_j \rangle^M \langle \cos(\theta_i - \theta_j) \rangle^C \rangle^M \;\; \forall(i,j) \in E \\
&\Im(W_{ij}) = \langle \langle v_i v_j \rangle^M \langle \sin(\theta_i - \theta_j) \rangle^S \rangle^M  \;\; \forall(i,j) \in E 
\end{align}
\end{subequations}
The QC relaxation also proposes to strengthen these convex envelopes
with a second-order cone constraint based on the absolute square of line 
power flow \eqref{complex_power}, first proposed in \cite{6102366}. This requires a new variable $l_{ij}$ for each
line $(i,j) \in E$ that captures the current magnitude squared on that line.
The following constraints are added to link the $l_{ij}$ variables to
the existing model variables.
\begin{subequations}
\begin{align}
& S_{ij} + S_{ji} = \bm Z_{ij} l_{ij} \;\; \forall(i,j) \in E \\
& |S_{ij}|^2 \leq W_{ii}l_{ij}  \;\; \forall(i,j) \in E 
\end{align}
\end{subequations}

\noindent
The complete QC relaxation is presented in Model
\ref{model:ac_opf_c_qc}.  This model is annotated as C-QC, as the
second-order cone constraints use current variables.  The motivation
for this distinction will become clear in Section \ref{sec:qc_alt}.

\section{An Illustrative Example}
\label{sec:example}

\begin{figure}[t]
\centering
\vspace{-0.2cm}
    \includegraphics[width=2.5cm]{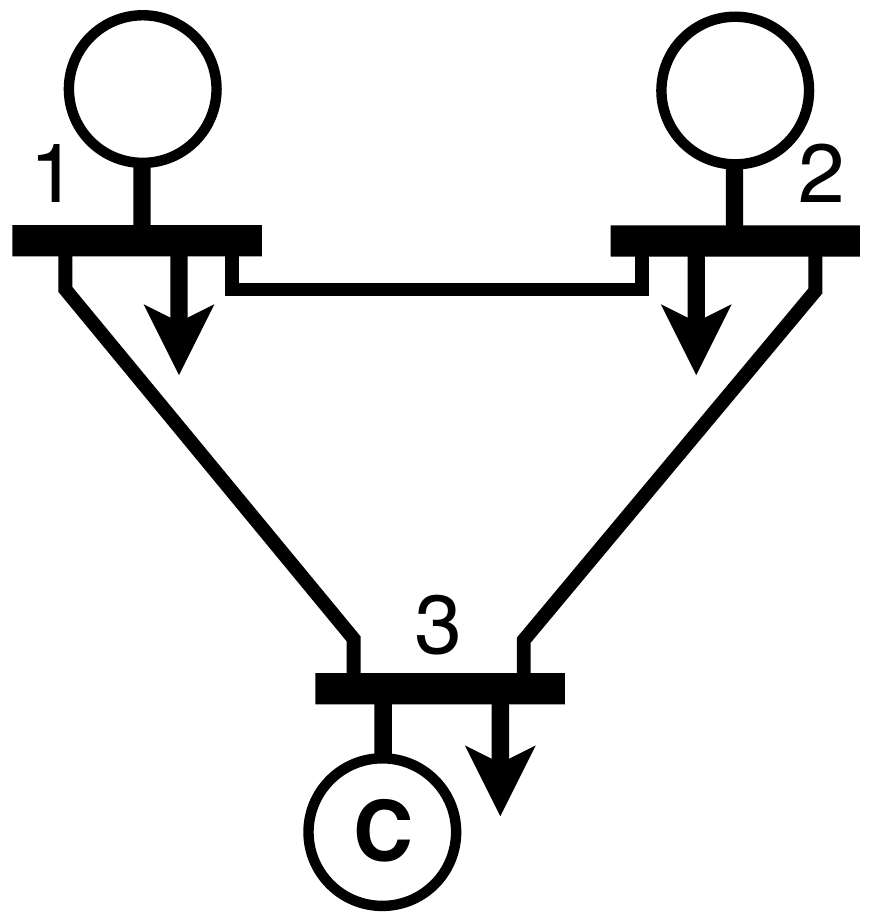} 
\vspace{-0.2cm}
\caption{A Network Diagram for the 3-Bus Example.}
\label{fig:3_bus_example}
\end{figure}

\begin{table}[t]
\caption{Three-Bus System Network Data (100 MVA Base).}
\centering
\begin{tabular}{|c||c|c|c|c|}
\hline
\multicolumn{5}{|c|}{Bus Parameters} \\
\hline
Bus & $\boldsymbol {p^d}$ & $\boldsymbol {q^d}$ & $\boldsymbol {v^l}$ & $\boldsymbol {v^u}$ \\
\hline
1 & 110 & 40 & 0.9 & 1.1 \\ 
\hline 
2 & 110 & 40 & 0.9 & 1.1 \\ 
\hline 
3 & 95 & 50 & 0.9 & 1.1 \\ 
\hline 
\end{tabular}
\vspace{0.1cm}

\begin{tabular}{|c||c|c|c|c|c|}
\hline
\multicolumn{6}{|c|}{Line Parameters} \\
\hline
From--To Bus & $\boldsymbol {r}$ & $\boldsymbol {x}$ & $\boldsymbol {b^c}$ & $\boldsymbol {s^u}$ & $\bm {\theta^\Delta}$ \\
\hline
1--2 & 0.042 & 0.90 & 0.30 & $\infty$ & $30^\circ$ \\ 
\hline 
2--3 & 0.025 & 0.75 & 0.70 & 50 & $30^\circ$ \\ 
\hline 
1--3 & 0.065 & 0.62 & 0.45 & $\infty$ & $30^\circ$ \\ 
\hline 
\end{tabular}
\vspace{0.1cm}

\begin{tabular}{|c||c|c||c|c|c|}
\hline
\multicolumn{6}{|c|}{Generator Parameters} \\
\hline
Generator & $\boldsymbol {p^{gl}}, \boldsymbol {p^{gu}}$ & $\boldsymbol {q^{gl}}, \boldsymbol {q^{gu}}$ & $\boldsymbol {c_2}$ & $\boldsymbol {c_1}$ & $\boldsymbol {c_0}$ \\
\hline
1 & $0,\infty$ & $-\infty,\infty$ & 0.110 & 5.0 & 0 \\ 
\hline 
2 & $0,\infty$ & $-\infty,\infty$ & 0.085 & 1.2 & 0 \\ 
\hline 
3 & $0,0$ & $-\infty,\infty$ & 0 & 0 & 0 \\ 
\hline 
\end{tabular}
\label{tbl:3_bus_network_data}
\end{table}

This section illustrates the three main power flow relaxations on the
3-bus network from \cite{6120344}, which has proven to be an excellent
test case for power flow relaxations. This system is depicted in
Figure \ref{fig:3_bus_example} and the associated network parameters
are given in Table \ref{tbl:3_bus_network_data}.  This network is
designed to have very few binding constraints.  Hence, the generator
and line limits are set to large non-binding values, except for the
thermal limit constraint on the line between buses 2 and 3, which is
set to 50 MVA. In addition to its base configuration, we also consider
this network with reduced phase angle difference bounds of
$18^\circ$. IPOPT \cite{Ipopt} is used as a heuristic \cite{6581918}
to find a feasible solution to the AC-OPF and we measure the {\em
  optimally gap} between the heuristic and a relaxation using the
formula
\begin{align}
\frac{\text{Heuristic} - \text{Relaxation}}{\text{Heuristic}}. \nonumber
\end{align}

\begin{table}[t]
\caption{AC-OPF Bounds using Relaxations on the 3-Bus Case.}
\centering
\begin{tabular}{|r||r||r|r|r|r|r|r|r|r|r||r|r|r|r|c|c|}
\hline
                 & \$/h & \multicolumn{3}{c|}{Optimality Gap (\%)}  \\
Test Case & AC  & SDP & QC & SOC \\
\hline  
\hline  
Base & {\bf 5812} & 0.39 & 1.24 & 1.32  \\
\hline 
 $ {\theta^\Delta} \! = \!18^\circ$ & {\bf 5992} & 2.06 & 1.24 & 4.28 \\
\hline 
\end{tabular}
\label{tbl:3_bus_results}
\end{table}


\noindent
Table \ref{tbl:3_bus_results} summarizes the results.\footnote{On this
  small example a nonlinear global optimization solver was used to
  prove that the heuristic solutions are in fact globally optimal.
  Such a validation is not possible on larger test cases.} In the base
configuration, the SDP relaxation has the smallest optimality gap. In
the $ {\theta^\Delta} \! = \!18^\circ$ case, the QC relaxation has the
smallest optimality gap, while reducing the bound on phase angle
differences increases the optimality gap for both the SDP and SOC
relaxations. This small network highlights two important results. First, the SDP
relaxation does not dominate the QC relaxation and vice-versa. Second,
the SDP and QC relaxations dominate the SOC relaxation. The next two
sections prove that this last result holds for all networks.

\section{An Alternate Form of the QC Relaxation}
\label{sec:qc_alt}

Section \ref{sec:relaxations} introduced two types of second-order
cone constraints.  Model \ref{model:ac_opf_w_soc} uses a SOC
constraint based on the absolute square of the voltage product
\cite{Jabr06}, i.e.,
\begin{align}
& |W_{ij}|^2 \leq W_{ii}W_{jj}  
\end{align}
while Model \ref{model:ac_opf_c_qc} uses a SOC constraint based on the
absolute square of the power flow \cite{6102366}, i.e.,
\begin{align}
& |S_{ij}|^2 \leq W_{ii}l_{ij}.
\end{align}
We now show that, in conjunction with the power flow equations
\eqref{w_6}--\eqref{w_7}, these two SOC formulations are equivalent. More precisely,
we show that 
\begin{equation}
\label{soc_w} \tag{W-SOC}
\begin{array}{ll}
& S_{ij} = \bm Y^*_{ij} W_{ii} - \bm Y^*_{ij} W_{ij} \;\; (i,j)\in E \\
& S_{ji} = \bm Y^*_{ij} W_{jj} - \bm Y^*_{ij} W_{ij}^* \;\; (i,j)\in E \\
& |W_{ij}|^2 \leq W_{ii} W_{jj} \;\; (i,j)\in E  \\
\end{array}
\end{equation}
is equivalent to 
\begin{equation}
\label{soc_c} \tag{C-SOC}
\begin{array}{ll}
& S_{ij} = \bm Y^*_{ij} W_{ii} - \bm Y^*_{ij} W_{ij} \;\; (i,j)\in E \\
& S_{ji} = \bm Y^*_{ij} W_{jj} - \bm Y^*_{ij} W_{ij}^* \;\; (i,j)\in E \\
& S_{ij} + S_{ji} = \bm Z_{ij} l_{ij} \;\; (i,j) \in E \\
& |S_{ij}|^2 \leq W_{ii} l_{ij} \;\; (i,j)\in E.
\end{array}
\end{equation}

\begin{model}[t]
\caption{The Alternate QC Relaxation AC-OPF-W-QC}
\label{model:ac_opf_w_qc}
\begin{subequations}
\vspace{-0.2cm}
\begin{align}
\mbox{\bf variables: } & S^g_i (\forall i\in N), \; W_{ij} (\forall (i,j)\in E), \; W_{ii} (\forall i \in N) : real \nonumber \\
& v_i \angle \theta_i (\forall i \in N) \nonumber \\
\mbox{\bf minimize: } & \eqref{w_obj} \nonumber \\
\mbox{\bf subject to: } & \mbox{\eqref{w_3}--\eqref{w_9}, \eqref{qc_1}--\eqref{qc_3}, \eqref{soc_1}} \nonumber 
\end{align}
\end{subequations}
\end{model}

\noindent
This equivalence suggests an alternative formulation of the QC
relaxation which is given in Model \ref{model:ac_opf_w_qc} and
establishes a clear connection between Models \ref{model:ac_opf_w_soc}
and \ref{model:ac_opf_c_qc}. Throughout this paper, we use $W$ and $C$
to denote which of these equivalent formulations is used.

We now prove these results. The following lemma, whose proof is
straight-forward and can be found in the Appendix, establishes
some useful equalities.

\begin{lemma}
\label{lemma:properties}
The following four equalities hold:
\begin{enumerate}
\item $|S_{ij}|^2 = |\bm Y_{ij}|^2 \left( W_{ii}^2 - W_{ii} W_{ij} - W_{ii} W^*_{ij} + |W_{ij}|^2 \right)$.
\item $|W_{ij}|^2 = W_{ii}^2 - W^*_{ii} \bm Z^*_{ij} S_{ij} - W_{ii} \bm Z_{ij} S^*_{ij} + |\bm Z_{ij}|^2 |S_{ij}|^2$.
\item $l_{ij} = |\bm Y_{ij}|^2 (W_{ii} + W_{jj} - W_{ij} - W^*_{ij})$.
\item $W_{jj} = W_{ii} - \bm Z^*_{ij} S_{ij} - \bm Z_{ij} S^*_{ij} + |\bm Z_{ij}|^2 l_{ij}$.
\end{enumerate}
\end{lemma}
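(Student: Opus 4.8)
The lemma is pure complex algebra, and two observations make every manipulation transparent: first, $\bm Z_{ij}$ is the inverse line admittance, so $\bm Y_{ij}\bm Z_{ij} = 1$ and hence $|\bm Y_{ij}|^2\,|\bm Z_{ij}|^2 = 1$; second, the diagonal terms $W_{ii}$, $W_{jj}$ and the current term $l_{ij}$ are real, so conjugation acts as the identity on them. I would begin by recording these two facts, and then dispatch the four equalities one at a time, each time substituting one of the linear power-flow relations \eqref{w_6}, \eqref{w_7}, \eqref{qc_4} and expanding a modulus of the form $|a-b|^2 = (a-b)(a^{*}-b^{*})$. It is worth noting up front that identities~1 and~3 are precisely what will later be needed to translate the cone constraint $|W_{ij}|^2 \le W_{ii}W_{jj}$ into $|S_{ij}|^2 \le W_{ii}l_{ij}$ and back, while identities~2 and~4 are their ``inverted'' counterparts.

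For identity~1, I would take \eqref{w_6} as $S_{ij} = \bm Y^{*}_{ij}(W_{ii}-W_{ij})$, multiply it by its conjugate $S^{*}_{ij} = \bm Y_{ij}(W_{ii}-W^{*}_{ij})$ (using $W_{ii}^{*}=W_{ii}$), and expand $(W_{ii}-W_{ij})(W_{ii}-W^{*}_{ij})$; the four resulting terms are exactly $W_{ii}^2 - W_{ii}W_{ij} - W_{ii}W^{*}_{ij} + |W_{ij}|^2$, and the prefactor is $\bm Y_{ij}\bm Y^{*}_{ij}=|\bm Y_{ij}|^2$. For identity~2, I would instead solve \eqref{w_6} for $W_{ij}$ by multiplying through by $\bm Z^{*}_{ij}=1/\bm Y^{*}_{ij}$, giving $W_{ij}=W_{ii}-\bm Z^{*}_{ij}S_{ij}$; then $|W_{ij}|^2 = W_{ij}W^{*}_{ij}$ with $W^{*}_{ij}=W^{*}_{ii}-\bm Z_{ij}S^{*}_{ij}$. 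Expanding, the term $W_{ii}W^{*}_{ii}$ collapses to $W_{ii}^2$ by reality while the cross-term retains its conjugate $W^{*}_{ii}$, so the product is $W_{ii}^2 - W^{*}_{ii}\bm Z^{*}_{ij}S_{ij} - W_{ii}\bm Z_{ij}S^{*}_{ij} + |\bm Z_{ij}|^2|S_{ij}|^2$, matching the stated form verbatim.

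For identity~3, I would add \eqref{w_6} and \eqref{w_7} to obtain $S_{ij}+S_{ji} = \bm Y^{*}_{ij}\bigl(W_{ii}+W_{jj}-W_{ij}-W^{*}_{ij}\bigr)$, equate the left side with \eqref{qc_4}, and multiply through by $\bm Y_{ij}$, using $\bm Y_{ij}\bm Z_{ij}=1$ on the right to isolate $l_{ij}$ and $\bm Y_{ij}\bm Y^{*}_{ij}=|\bm Y_{ij}|^2$ on the left. Identity~4 then falls out of identity~3: substitute $W_{ij}=W_{ii}-\bm Z^{*}_{ij}S_{ij}$ and $W^{*}_{ij}=W_{ii}-\bm Z_{ij}S^{*}_{ij}$, so that $W_{ii}+W_{jj}-W_{ij}-W^{*}_{ij}$ simplifies to $W_{jj}-W_{ii}+\bm Z^{*}_{ij}S_{ij}+\bm Z_{ij}S^{*}_{ij}$, multiply through by $|\bm Z_{ij}|^2 = 1/|\bm Y_{ij}|^2$, and rearrange for $W_{jj}$. (Equivalently, identity~4 can be read off directly from \eqref{w_7} solved for $W_{jj}$ together with \eqref{qc_4} to eliminate $S_{ji}$.)

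I do not anticipate any real obstacle: the argument is bookkeeping rather than insight. The only points requiring care are keeping the real quantities $W_{ii}$, $W_{jj}$, $l_{ij}$ straight under conjugation, not dropping the mixed term when expanding $|a-b|^2$ for complex $b$, and applying $\bm Z_{ij}=\bm Y_{ij}^{-1}$ consistently (so that $\bm Z^{*}_{ij}=\bm Y_{ij}^{-*}$ and $|\bm Y_{ij}|^2|\bm Z_{ij}|^2=1$). I would present identities 1--2 together as the forward and inverted forms of \eqref{w_6}, and identities 3--4 together as the forward and inverted forms obtained from \eqref{w_6}--\eqref{w_7} and \eqref{qc_4}, which also makes clear why all four are ``the same computation'' viewed from two directions.
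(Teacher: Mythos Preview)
Your argument is correct and, for properties 1, 2, and 4, line-for-line the same as the paper's: expand $|S_{ij}|^2$ from \eqref{w_6}, invert \eqref{w_6} to get $W_{ij}=W_{ii}-\bm Z^*_{ij}S_{ij}$ and expand $|W_{ij}|^2$, and obtain property~4 by substituting those expressions into property~3 and rearranging.

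The only place you diverge is property~3. The paper derives it from first principles by returning to Ohm's law $I_{ij}=\bm Y_{ij}(V_i-V_j)$, computing $l_{ij}=|I_{ij}|^2=|\bm Y_{ij}|^2(V_i-V_j)(V_i^*-V_j^*)$, and then rewriting the voltage products as $W$'s. You instead stay entirely at the $W/S/l$ level: add \eqref{w_6} and \eqref{w_7} to get $S_{ij}+S_{ji}=\bm Y^*_{ij}(W_{ii}+W_{jj}-W_{ij}-W^*_{ij})$, equate with \eqref{qc_4}, and multiply by $\bm Y_{ij}$. The two derivations are the same computation read in opposite directions; your route has the minor advantage of never leaving the relaxation's variable space, while the paper's route makes clear that property~3 is the \emph{physical} definition of $l_{ij}$ and hence is available even before \eqref{qc_4} is assumed (which is how it is used in the $\text{W-SOC}\Rightarrow\text{C-SOC}$ direction of Theorem~\ref{theorem:main}).
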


\noindent
We are now ready to prove the main result of this section. 

\begin{theorem}
\label{theorem:main}
\eqref{soc_c} is equivalent to \eqref{soc_w}.
\end{theorem}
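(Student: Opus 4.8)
The plan is to prove the two inclusions between the feasible sets, viewing the extra current variable $l_{ij}$ that appears in \eqref{soc_c} but not in \eqref{soc_w} as an auxiliary quantity: ``equivalent'' here means that the projections of the two feasible sets onto the shared variables $(S_{ij},S_{ji},W_{ii},W_{jj},W_{ij})$ coincide. The bridge between the formulations is that, as soon as the power-flow equations \eqref{w_6}--\eqref{w_7} hold, $l_{ij}$ is completely determined by the shared variables. Indeed, $\bm Z_{ij}\bm Y_{ij}=1$, so $\bm Z_{ij}\neq 0$ and $|\bm Y_{ij}|^2>0$ for any physical line; hence the constraint $S_{ij}+S_{ji}=\bm Z_{ij} l_{ij}$ is equivalent to $l_{ij}=\bm Y_{ij}(S_{ij}+S_{ji})$, which by \eqref{w_6}--\eqref{w_7} equals $|\bm Y_{ij}|^2\left(W_{ii}+W_{jj}-W_{ij}-W_{ij}^*\right)$ — exactly Lemma~\ref{lemma:properties}(3). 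So defining $l_{ij}$ by that formula makes $S_{ij}+S_{ji}=\bm Z_{ij} l_{ij}$ hold automatically, and conversely any $l_{ij}$ feasible for \eqref{soc_c} is forced to take this value.

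For the forward direction I would take a tuple $(S,W,l)$ feasible for \eqref{soc_c}. The power-flow equations are shared, so only $|W_{ij}|^2\le W_{ii}W_{jj}$ remains to be checked. Combining Lemma~\ref{lemma:properties}(1) and Lemma~\ref{lemma:properties}(3) (that is, substituting both expressions into $W_{ii}l_{ij}-|S_{ij}|^2$) yields the algebraic identity
\[
W_{ii} l_{ij} - |S_{ij}|^2 = |\bm Y_{ij}|^2 \left( W_{ii} W_{jj} - |W_{ij}|^2 \right).
\]
Since $|\bm Y_{ij}|^2>0$, the SOC inequality $|S_{ij}|^2\le W_{ii}l_{ij}$ is equivalent to $|W_{ij}|^2\le W_{ii}W_{jj}$, so the projection of the tuple satisfies \eqref{soc_w}.

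For the reverse direction, given $(S,W)$ feasible for \eqref{soc_w}, I would define $l_{ij} := |\bm Y_{ij}|^2\left(W_{ii}+W_{jj}-W_{ij}-W_{ij}^*\right)$; using $\bm Z_{ij}\bm Y_{ij}=1$ together with \eqref{w_6}--\eqref{w_7} one checks $S_{ij}+S_{ji}=\bm Z_{ij}l_{ij}$ directly, and reading the displayed identity in the other direction turns $|W_{ij}|^2\le W_{ii}W_{jj}$ into $|S_{ij}|^2\le W_{ii}l_{ij}$, so $(S,W,l)$ satisfies \eqref{soc_c}. There is no real obstacle here: Lemma~\ref{lemma:properties} already absorbs all the messy algebra, so the whole argument reduces to the single identity above plus the observation that $l_{ij}$ is uniquely recoverable from the shared variables. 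The only points needing a word of care are invoking $\bm Z_{ij}\bm Y_{ij}=1$ (hence $|\bm Y_{ij}|^2>0$) to pass freely between $\bm Z_{ij}$- and $\bm Y_{ij}$-scaled forms and to divide through by $|\bm Y_{ij}|^2$, and stating explicitly that the equivalence is an equality of the $(S,W)$-projections with $l_{ij}$ treated as derived.
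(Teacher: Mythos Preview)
Your proof is correct and follows the same strategy as the paper: use Lemma~\ref{lemma:properties} to show that, once the power-flow equalities \eqref{w_6}--\eqref{w_7} hold and $l_{ij}$ is set consistently, the two SOC inequalities are equivalent. The only noteworthy difference is organizational. The paper handles the two directions asymmetrically, invoking properties~(1) and~(3) of Lemma~\ref{lemma:properties} for \eqref{soc_w}$\Rightarrow$\eqref{soc_c} and properties~(2) and~(4) for the reverse. You instead combine (1) and (3) into the single identity
\[
W_{ii}\,l_{ij}-|S_{ij}|^2 \;=\; |\bm Y_{ij}|^2\bigl(W_{ii}W_{jj}-|W_{ij}|^2\bigr),
\]
which immediately gives both directions at once. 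Your route is slightly more economical (properties~(2) and~(4) never enter) and makes the symmetry of the equivalence explicit; the paper's route has the minor expository advantage of spelling out each inequality chain. Substantively the arguments coincide.
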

\begin{proof}

The proof is similar in spirit to those presented in \cite{6483453,6897933}.  

\paragraph*{\ref{soc_w} $\Rightarrow$ \ref{soc_c}} Every solution to \eqref{soc_w} is a solution to \eqref{soc_c}. 
Given a solution to \eqref{soc_w}, by equality (3) in Lemma \ref{lemma:properties}, we assign $l_{ij}$ as follows:
\begin{align}
&  l_{ij} = |\bm Y_{ij}|^2 (W_{ii} - W_{ij} - W^*_{ij} + W_{jj}) \;\; (i,j) \in E \nonumber
\end{align}
This assignment satisfies the power loss constraint \eqref{qc_4} by
definition of the power. It remains to show that second-order cone
constraint in \eqref{soc_c} is satisfied. Using equalities (1) and (3)
in Lemma \ref{lemma:properties}, we obtain
\begin{subequations}
\begin{align}
& |S_{ij}|^2 = |\bm Y_{ij}|^2 \left( W_{ii}^2 - W_{ii} W_{ij} - W_{ii} W^*_{ij} + | W_{ij}|^2 \right) \nonumber \\
& |S_{ij}|^2 \leq |\bm Y_{ij}|^2 \left( W_{ii}^2 - W_{ii} W_{ij} -  W_{ii} W^*_{ij} + W_{ii} W_{jj} \right) \nonumber \\
& |S_{ij}|^2 \leq W_{ii} |\bm Y_{ij}|^2 \left(W_{ii} - W_{ij} - W^*_{ij} + W_{jj} \right) \nonumber \\
& |S_{ij}|^2 \leq W_{ii} l_{ij}. \nonumber
\end{align}
\end{subequations}

\paragraph*{\ref{soc_c} $\Rightarrow$ \ref{soc_w}} Every solution to \eqref{soc_c} is a solution to \eqref{soc_w}. 
We show that the values of $W_{ij}$ in \eqref{soc_c} satisfy the
second-order cone constraint in \eqref{soc_w}. Using equalities (2)
and (4) in Lemma \ref{lemma:properties} and the fact that $W_{ii} =
W^*_{ii}$ since $W_{ii}$ is a real number, we have
\begin{subequations}
\begin{align}
& |W_{ij}|^2 =  W_{ii}^2 -  W^*_{ii} \bm Z^*_{ij}  S_{ij} -  W_{ii} \bm Z_{ij}  S^*_{ij} + |\bm Z_{ij}|^2 | S_{ij}|^2 \nonumber \\
& | W_{ij}|^2 \leq  W_{ii}^2 -  W^*_{ii} \bm Z^*_{ij}  S_{ij} -  W_{ii} \bm Z_{ij}  S^*_{ij} + |\bm Z_{ij}|^2  W_{ii}  l_{ij} \nonumber \\
& | W_{ij}|^2 \leq  W_{ii}( W_{ii} - \bm Z^*_{ij}  S_{ij} - \bm Z_{ij}  S^*_{ij} + |\bm Z_{ij}|^2  l_{ij})  \nonumber \\
& | W_{ij}|^2 \leq  W_{ii} W_{jj} \nonumber
\end{align}
\end{subequations}
and the result follows.
\end{proof}

\begin{corollary}
\label{corollary:model}
Model \ref{model:ac_opf_c_qc} is equivalent to Model \ref{model:ac_opf_w_qc}.
\end{corollary}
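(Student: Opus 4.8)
The plan is to derive the corollary directly from Theorem~\ref{theorem:main}. The two models differ only in the block of constraints that this theorem relates, and the auxiliary current variables $l_{ij}$ occur nowhere else, so the equivalence of the models will follow by transporting feasible points between them through the correspondence established in the proof of Theorem~\ref{theorem:main}.

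First I would pin down the precise difference between the two models. Models~\ref{model:ac_opf_c_qc} and~\ref{model:ac_opf_w_qc} share the objective \eqref{w_obj}, the variables $S^g_i$, $W_{ij}$, $W_{ii}$, $v_i$, $\theta_i$, and the constraints \eqref{w_3}--\eqref{w_9} and \eqref{qc_1}--\eqref{qc_3}, none of which mentions $l_{ij}$. Model~\ref{model:ac_opf_c_qc} has, in addition, the variables $l_{ij}$ and the constraints \eqref{qc_4}--\eqref{qc_5}; Model~\ref{model:ac_opf_w_qc} has instead the single constraint \eqref{soc_1}. Since the power-flow definitions \eqref{w_6}--\eqref{w_7} are among \eqref{w_3}--\eqref{w_9}, the constraints of Model~\ref{model:ac_opf_c_qc} involving $l_{ij}$, together with \eqref{w_6}--\eqref{w_7}, form exactly the block \eqref{soc_c}, whereas \eqref{soc_1} together with \eqref{w_6}--\eqref{w_7} forms exactly the block \eqref{soc_w}; every other constraint is identical in the two models and free of $l_{ij}$.

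Next I would apply Theorem~\ref{theorem:main}. Given a feasible point of Model~\ref{model:ac_opf_w_qc}, its $S$ and $W$ components satisfy \eqref{soc_w}, so by the \eqref{soc_w} $\Rightarrow$ \eqref{soc_c} direction of Theorem~\ref{theorem:main} the assignment $l_{ij} = |\bm Y_{ij}|^2(W_{ii} + W_{jj} - W_{ij} - W^*_{ij})$ extends it to a point satisfying \eqref{soc_c}; the remaining constraints and the objective are untouched, so this is a feasible point of Model~\ref{model:ac_opf_c_qc} with the same objective value. Conversely, deleting the $l_{ij}$ coordinates from a feasible point of Model~\ref{model:ac_opf_c_qc} gives, by the \eqref{soc_c} $\Rightarrow$ \eqref{soc_w} direction, a feasible point of Model~\ref{model:ac_opf_w_qc} with the same objective value. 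Moreover \eqref{qc_4} forces $l_{ij} = \bm Z^{-1}_{ij}(S_{ij} + S_{ji})$, which by equality~(3) of Lemma~\ref{lemma:properties} equals $|\bm Y_{ij}|^2(W_{ii} + W_{jj} - W_{ij} - W^*_{ij})$, so the two maps are mutually inverse on feasible sets modulo the auxiliary variables. Since \eqref{w_obj} depends only on the common variables, the two models have equal optimal value and corresponding optimal solutions, which is the claim.

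I expect the only real work to be bookkeeping: checking that every constraint lying outside the blocks \eqref{soc_w} and \eqref{soc_c}, and the objective, is literally the same in both models and contains no occurrence of $l_{ij}$, so that Theorem~\ref{theorem:main} may be applied \emph{locally} to the second-order-cone block while the remainder of each model is carried over verbatim. Once that has been verified, the corollary is immediate.
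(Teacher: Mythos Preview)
Your proposal is correct and matches the paper's approach: the corollary is stated without proof in the paper, as it follows immediately from Theorem~\ref{theorem:main} by exactly the bookkeeping you describe. Your explicit verification that the two models differ only in the \eqref{soc_w}/\eqref{soc_c} block, that $l_{ij}$ appears nowhere else, and that the objective depends only on the shared variables is precisely what is needed to pass from the theorem to the corollary.
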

\noindent
Computational results on these two formulations are presented in the Appendix. 
The main message is that the \ref{soc_c}
formulation is preferable to \ref{soc_w} in the current state of the
solving technology, especially on very large networks.

It is important to note that, for clarity, the proofs are presented on
the purest version of the AC power flow equations. Transmission system
test cases typically include additional parameters such as bus shunts,
line charging, and transformers.  Proofs that these results can be
extended to include the additional parameters in transmission system
test cases are presented in the Appendix. 

\section{Relations of the Power Flow Relaxations}
\label{sec:relations}

We are now in a position to state the relationships between the convex
relaxations.  Recall that model $M_1$ is a relaxation of model $M_2$,
denoted by $M_2 \subseteq M_1$, if the solution set of $M_2$ is
included in the solution set $M_1$. We use $M_1 \neq M_2$ to denote
the fact that neither $M_2 \subseteq M_1$ nor $M_1 \subseteq M_2$
holds. Since our relaxations have different sets of variables, we
define the solution set as the assignments to the $W_{ij}$ variables.

\begin{figure}[t]
\centering
\vspace{-0.2cm}
\includegraphics[width=4.0cm]{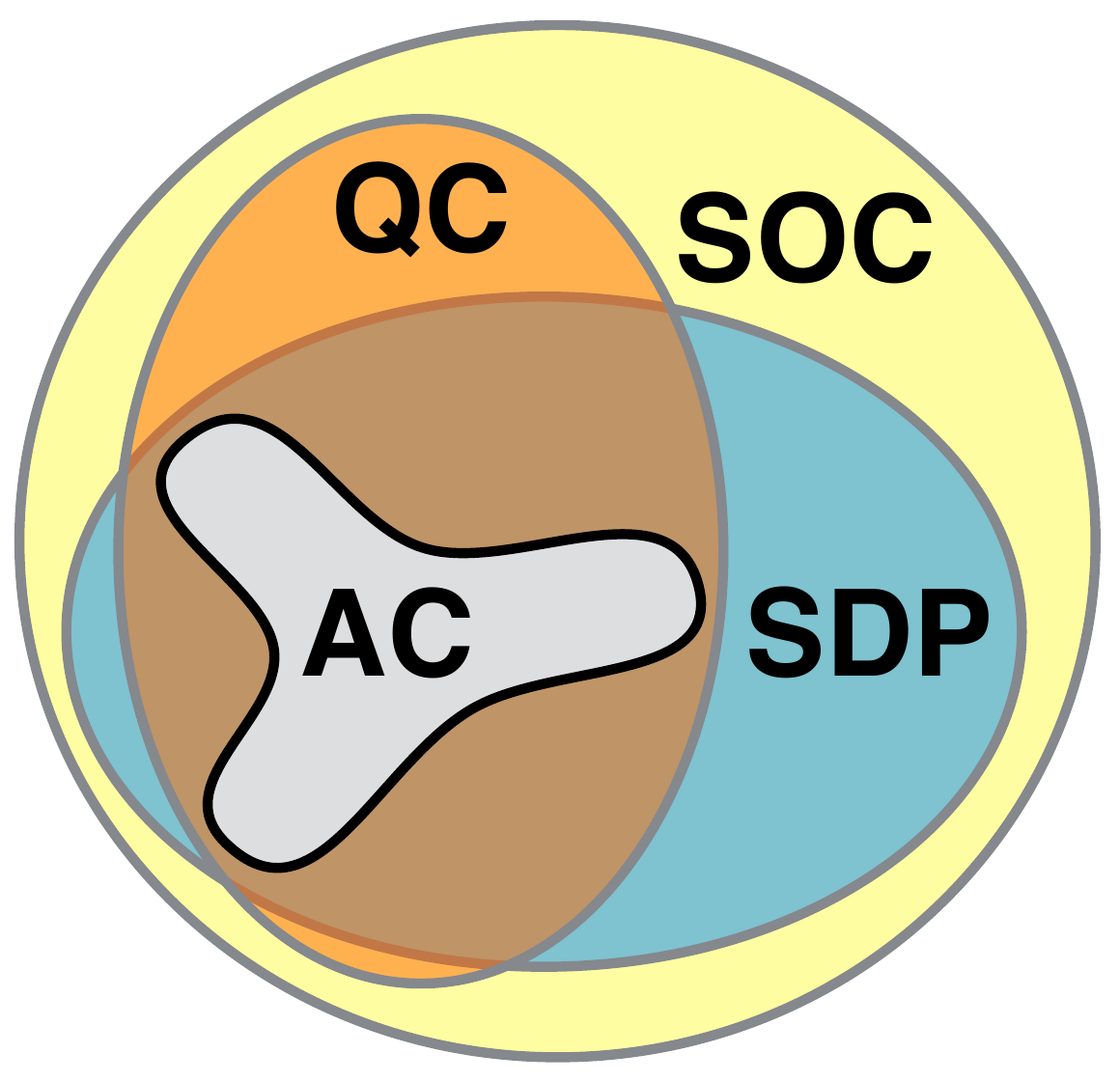}
\vspace{-0.2cm}
\caption{A Venn Diagram of the Solutions Sets for Various AC Power Flow Relaxations (set sizes in this illustration are not to scale).} 
\label{fig:relaxation_sets}
\end{figure}

\begin{theorem}
The following properties, illustrated in Figure \ref{fig:relaxation_sets}, hold:
\begin{enumerate}
\item $SDP \subseteq SOC$.
\item $SDP \neq QC$.
\item $QC \subseteq SOC$.
\end{enumerate}
\end{theorem}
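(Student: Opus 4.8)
The plan is to handle the three claims separately: claims (1) and (3) will be pure containment arguments, while claim (2) will require exhibiting explicit witnesses on both sides. For claim (1), $SDP \subseteq SOC$, I would start from an arbitrary feasible point of Model~\ref{model:ac_opf_w_sdp} with its Hermitian positive semidefinite matrix $W$, and observe that every $2 \times 2$ principal submatrix (indexed by a pair $\{i,j\}$) is also positive semidefinite, so that $W_{ii} W_{jj} - |W_{ij}|^2 \geq 0$ --- which is exactly constraint~\eqref{soc_1}. Since the diagonal of a Hermitian matrix is real, the restriction of $W$ to its diagonal entries and its edge entries $\{W_{ij} : (i,j) \in E\}$ is a legal assignment for the variables of Model~\ref{model:ac_opf_w_soc}, and constraints~\eqref{w_3}--\eqref{w_9} --- which appear verbatim in both models and involve only these retained quantities --- are inherited. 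Hence every SDP-feasible point restricts to an SOC-feasible point.

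For claim (3), $QC \subseteq SOC$, I would invoke Corollary~\ref{corollary:model}: the QC relaxation of Model~\ref{model:ac_opf_c_qc} has the same set of $W$-assignments as the alternate form in Model~\ref{model:ac_opf_w_qc}. But Model~\ref{model:ac_opf_w_qc} is literally the SOC relaxation (Model~\ref{model:ac_opf_w_soc}) augmented with the polar variables $v_i \angle \theta_i$ and the envelope-linking constraints~\eqref{qc_1}--\eqref{qc_3}; in particular it still keeps the cone constraint~\eqref{soc_1}. So the $W$-projection of any feasible point of Model~\ref{model:ac_opf_w_qc} satisfies all constraints of Model~\ref{model:ac_opf_w_soc}, giving $QC \subseteq SOC$ at once. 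This is precisely why Section~\ref{sec:qc_alt} is developed first: it is the tool that makes this step immediate.

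For claim (2), $SDP \neq QC$, a syntactic comparison is hopeless, so I would use the 3-bus instance of Section~\ref{sec:example} to separate the two feasible sets in both directions. The key preliminary remark is that in Model~\ref{model:ac_opf_w} the generation variables $S^g_i$ are uniquely determined from the $W$-variables through the balance equations~\eqref{w_5} and the flow definitions~\eqref{w_6}--\eqref{w_7}, so the objective~\eqref{w_obj} is a fixed function of the $W$-assignment; consequently a $W$-assignment feasible for one relaxation whose objective value lies strictly below the optimal value of another relaxation cannot be feasible for that other relaxation. In the base configuration the SDP bound strictly exceeds the QC bound (optimality gaps $0.39\%$ vs.\ $1.24\%$ in Table~\ref{tbl:3_bus_results}), so an optimal $W$-assignment of the QC relaxation is not SDP-feasible, i.e.\ $QC \not\subseteq SDP$; in the $\theta^\Delta = 18^\circ$ configuration the inequality reverses ($2.06\%$ vs.\ $1.24\%$), so an optimal $W$-assignment of the SDP relaxation is not QC-feasible, i.e.\ $SDP \not\subseteq QC$. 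Together these give $SDP \neq QC$.

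The hard part will be claim (2): unlike the other two parts it cannot follow from the structure of the models, and it rests on the numerical separation furnished by the 3-bus example together with the small but essential observation that the objective is a genuine function of the $W$-variables, so that the tabulated bound gaps really do certify infeasibility in the other relaxation. Claims (1) and (3) are routine once the alternate QC form is in hand; the only care needed there is the bookkeeping about which $W$-entries each model carries and checking that the shared constraints~\eqref{w_3}--\eqref{w_9} are indeed identical across the three models.
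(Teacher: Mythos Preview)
Your proposal is correct and matches the paper's approach: property~(3) is handled exactly as the paper does (via Corollary~\ref{corollary:model} and the observation that Model~\ref{model:ac_opf_w_qc} strictly augments Model~\ref{model:ac_opf_w_soc}), property~(2) is handled via the 3-bus example of Section~\ref{sec:example} (the paper simply cites that section, while you spell out the argument, including the useful remark that the objective is determined by the $W$-variables), and for property~(1) you give directly the $2\times 2$ principal-minor argument that the paper defers to reference~\cite{6345272}. The only embellishment is your explicit justification in~(2) that a strict gap in optimal values forces infeasibility of the optimizer in the tighter relaxation --- this is exactly the content behind the paper's one-line citation.
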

\begin{proof}
  Properties (1) and (2) follows from \cite{6345272} and Section
  \ref{sec:example} respectively. For Property (3), observe that the
  set of constraints in Model \ref{model:ac_opf_w_qc} (W-QC) is a
  superset of those in Model \ref{model:ac_opf_w_soc}. The result
  follows from Corollary \ref{corollary:model}.
\end{proof}

\noindent
Observe that the additional constraints \eqref{qc_1}--\eqref{qc_3} in
the QC formulations are parameterized by the $\theta^\Delta$.  As
$\theta^\Delta$ grows larger, the QC model reduces to the SOC
model. Clearly, the strength of the QC relaxation is sensitive to this
input parameter, as illustrated in Section \ref{sec:example}.

%
%

\section{Computational Evaluation}
\label{sec:expar}

This section presents a computational evaluation of the relaxations and address the following questions:
\begin{enumerate}
\item How big are the optimality gaps in practice?
\item What are the runtime requirements of the relaxations?
\item How robust is the solving technology for the relaxations?
\end{enumerate}
The relaxations were compared on 105 state-of-the-art AC-OPF
transmission system test cases from the NESTA v0.4.0 archive
\cite{nesta}.  These test cases range from as few as 3 buses to as
many as 9000 and consist of 35 different networks under a typical
operating condition (TYP), a congested operating condition (API), and
a small angle difference condition (SAD).\footnote{Nine test cases
  based on the EIR Grid network were omitted from evaluation because
  the AC-OPF-W-SDP solver did not support inactive buses.}

\paragraph*{Experimental Setting}

All of the computations are conducted on \amdq. IPOPT 3.12
\cite{Ipopt} with linear solver ma27 \cite{hsl_lib}, as suggested by \cite{acopf_solvers}, was used as a
heuristic for finding locally optimal feasible solutions to the non-convex AC-OPF formulated in AMPL \cite{ampl}. 
The SDP relaxation was executed on
the state-of-the-art implementation \cite{opfBranchDecompImpl} which
uses a branch decomposition \cite{opfBranchDecomp} with a minor
extension to add constraint \eqref{w_9}.  The SDP solver SDPT3 4.0
\cite{Toh99sdpt3} was used with the modifications suggested in
\cite{opfBranchDecompImpl}.
The second-order cone models were formulated in AMPL 
and IPOPT was used to solve the models.
Numerical stability appears to be a significant challenge on the power
networks with more than 1000 buses \cite{BienstockCH14}. 
Note that IPOPT is single-threaded and does
not take advantage of the multiple cores available in the computation
servers.  This gives some computational advantage to the SDP solver,
which utilizes multiple cores.
%

\paragraph*{Challenging Test Cases}

We observe that 52 of the 105 test cases considered have an optimality
gap of less than 1.0\% with the SOC relaxation.  Such test cases are
not particularly useful for this study as the improvements of the SDP
and QC models are minor.  Hence, we focus our attention on the 53 test
cases where the SOC optimality gap is greater than 1.0\%. The results
are displayed in Table \ref{tbl:gaps_time}.

\begin{table*}[t!]
\footnotesize
\center
\caption{Quality and Runtime Results of AC Power Flow Relaxations}
\vspace{-0.2cm}
\begin{tabular}{|r||r||r|r|r|r||r|r|r|r|r|r|r|r|r|c|c|}
\hline
                 & \$/h & \multicolumn{4}{c||}{Optimality Gap (\%)} & \multicolumn{5}{c|}{Runtime (seconds)} \\
Test Case & AC & SDP & QC & SOC & CP & AC & SDP & QC & SOC & CP \\
\hline
\hline
\multicolumn{11}{|c|}{Typical Operating Conditions (TYP)} \\
\hline
 nesta\_case3\_lmbd & 5812.64 & 0.39 & 1.24 & 1.32 & 2.99 & 0.12 & 4.16 & 0.07 & 0.05 & 0.03 \\
\hline
 nesta\_case5\_pjm & 17551.89 & 5.22 & 14.54 & 14.54 & 15.62 & 0.04 & 5.36 & 0.09 & 0.03 & 0.05 \\
\hline
 nesta\_case30\_ieee & 204.97 & \bf 0.00  & 15.64 & 15.88 & 27.91 & 0.09 & 8.38 & 0.17 & 0.07 & 0.06 \\
\hline
 nesta\_case118\_ieee & 3718.64 & 0.06 & 1.72 & 2.07 & 7.87 & 0.41 & 12.62 & 0.87 & 0.43 & 0.05 \\
\hline
 nesta\_case162\_ieee\_dtc & 4230.23 & 1.08 & 4.00 & 4.03 & 15.44 & 0.61 & 35.20 & 1.48 & 0.31 & 0.04 \\
\hline
 nesta\_case300\_ieee & 16891.28 & 0.08 & 1.17 & 1.18 & n.a. & 0.80 & 29.69 & 2.83 & 0.65 & n.a. \\
\hline
 nesta\_case2224\_edin & 38127.69 & 1.22 & 6.03 & 6.09 & 8.45 & 11.42 & 690.16 & 65.59 & 45.99 & 0.33 \\
\hline
 nesta\_case2383wp\_mp & 1868511.78 & 0.37 & 1.04 & 1.05 & 5.35 & 12.41 & 1966.10 & 57.87 & 12.91 & 0.80 \\
\hline
 nesta\_case3012wp\_mp & 2600842.72 & --- & 1.00 & 1.02 & n.a. & 12.40 & 14588.79$^\dagger$ & 53.59 & 19.15 & n.a. \\
\hline
 nesta\_case9241\_pegase & 315913.26 & --- & 1.67 & --- & n.a. & 132.25 & --- & 3064.42 & --- & n.a. \\
\hline
\hline
\multicolumn{11}{|c|}{Congested Operating Conditions (API)} \\
\hline
 nesta\_case3\_lmbd\_\_api & 367.74 & 1.26 & 1.83 & 3.30 & 14.79 & 0.18 & 4.41 & 0.09 & 0.05 & 0.23 \\
\hline
 nesta\_case6\_ww\_\_api & 273.76 & 0.00$^\star$ & 13.14 & 13.33 & 17.17 & 0.34 & 13.19 & 0.07 & 0.06 & 0.03 \\
\hline
 nesta\_case14\_ieee\_\_api & 325.56 & \bf 0.00  & 1.34 & 1.34 & 8.89 & 0.19 & 5.64 & 0.11 & 0.08 & 0.94 \\
\hline
 nesta\_case24\_ieee\_rts\_\_api & 6421.37 & 1.45 & 13.77 & 20.70 & 24.12 & 0.14 & 7.50 & 0.26 & 0.09 & 0.04 \\
\hline
 nesta\_case30\_as\_\_api & 571.13 & 0.00 & 4.76 & 4.76 & 8.01 & 0.38 & 6.12 & 0.17 & 0.11 & 1.11 \\
\hline
 nesta\_case30\_fsr\_\_api & 372.14 & 11.06 & 45.97 & 45.97 & 48.80 & 0.25 & 7.25 & 0.19 & 0.09 & 0.92 \\
\hline
 nesta\_case30\_ieee\_\_api & 415.53 & 0.00 & 1.01 & 1.01 & 12.75 & 0.07 & 6.60 & 0.19 & 0.09 & 0.03 \\
\hline
 nesta\_case39\_epri\_\_api & 7466.25 & \bf 0.00  & 2.97 & 2.99 & 13.31 & 0.10 & 7.36 & 0.29 & 0.12 & 0.04 \\
\hline
 nesta\_case73\_ieee\_rts\_\_api & 20123.98 & 4.29 & 12.01 & 14.34 & 17.83 & 0.48 & 10.03 & 0.66 & 0.20 & 0.06 \\
\hline
 nesta\_case89\_pegase\_\_api & 4288.02 & 18.11 & 20.39 & 20.43 & 22.60 & 1.16 & 21.58 & 1.29 & 0.81 & 0.04 \\
\hline
 nesta\_case118\_ieee\_\_api & 10325.27 & 31.50 & 43.93 & 44.08 & 49.69 & 0.46 & 12.59 & 0.84 & 0.25 & 0.05 \\
\hline
 nesta\_case162\_ieee\_dtc\_\_api & 6111.68 & 0.85 & 1.33 & 1.34 & 19.39 & 0.50 & 36.85 & 1.53 & 0.39 & 0.05 \\
\hline
 nesta\_case189\_edin\_\_api & 1982.82 & 0.05 & 5.78 & 5.78 & n.a. & 1.07 & 16.10 & 1.14 & 0.33 & n.a. \\
\hline
 nesta\_case2224\_edin\_\_api & 46235.43 & 1.10 & 2.77 & 2.77 & 9.07 & 12.28 & 672.04 & 81.66 & 88.33 & 0.33 \\
\hline
 nesta\_case2383wp\_mp\_\_api & 23499.48 & 0.10 & 1.12 & 1.12 & 3.10 & 9.50 & 1421.39 & 28.37 & 10.25 & 0.34 \\
\hline
 nesta\_case2736sp\_mp\_\_api & 25437.70 & 0.07 & 1.32 & 1.33 & 3.89 & 9.21 & 2278.77 & 41.29 & 10.51 & 0.36 \\
\hline
 nesta\_case2737sop\_mp\_\_api & 21192.40 & 0.00 & 1.05 & 1.06 & 4.62 & 9.29 & 1887.22 & 30.94 & 9.91 & 0.32 \\
\hline
 nesta\_case2869\_pegase\_\_api & 96573.10 & 0.92$^\star$ & 1.49 & 1.49 & 5.16 & 21.03 & 1579.87 & 102.55 & 161.96 & 0.37 \\
\hline
 nesta\_case3120sp\_mp\_\_api & 22874.98 & --- & 3.02 & 3.03 & n.a. & 14.92 & 15018.93$^\dagger$ & 41.72 & 12.19 & n.a. \\
\hline
 nesta\_case9241\_pegase\_\_api & 241975.18 & --- & 2.45 & 2.59 & n.a. & 140.73 & --- & 3511.60 & 8387.11 & n.a. \\
\hline
\hline
\multicolumn{11}{|c|}{Small Angle Difference Conditions (SAD)} \\
\hline
 nesta\_case3\_lmbd\_\_sad & 5992.72 & 2.06 & 1.24$^\star$ & 4.28 & 5.90 & 0.19 & 4.39 & 0.10 & 0.05 & 0.03 \\
\hline
 nesta\_case4\_gs\_\_sad & 324.02 & 0.05 & 0.81 & 4.90 & 66.06 & 0.24 & 4.16 & 0.06 & 0.06 & 0.07 \\
\hline
 nesta\_case5\_pjm\_\_sad & 26423.32 & \bf 0.00  & 1.10 & 3.61 & 43.95 & 0.08 & 5.35 & 0.11 & 0.05 & 0.03 \\
\hline
 nesta\_case6\_c\_\_sad & 24.43 & 0.00 & 0.40 & 1.36 & 6.79 & 0.26 & 5.32 & 0.11 & 0.05 & 0.02 \\
\hline
 nesta\_case9\_wscc\_\_sad & 5590.09 & 0.00 & 0.41 & 1.50 & 6.69 & 0.14 & 4.18 & 0.19 & 0.05 & 0.03 \\
\hline
 nesta\_case24\_ieee\_rts\_\_sad & 79804.96 & 6.05 & 3.88 & 11.42 & 23.56 & 0.10 & 6.24 & 0.30 & 0.11 & 0.04 \\
\hline
 nesta\_case29\_edin\_\_sad & 46933.26 & 28.44 & 20.57 & 34.47 & 36.79 & 0.70 & 9.19 & 1.73 & 0.27 & 0.06 \\
\hline
 nesta\_case30\_as\_\_sad & 914.44 & 0.47 & 3.07 & 9.16 & 16.06 & 0.18 & 6.49 & 0.22 & 0.09 & 0.03 \\
\hline
 nesta\_case30\_ieee\_\_sad & 205.11 & \bf 0.00  & 3.96 & 5.84 & 27.96 & 0.12 & 7.49 & 0.18 & 0.09 & 0.03 \\
\hline
 nesta\_case73\_ieee\_rts\_\_sad & 235241.70 & 4.10 & 3.51 & 8.37 & 22.21 & 0.30 & 9.48 & 0.87 & 0.20 & 0.07 \\
\hline
 nesta\_case118\_ieee\_\_sad & 4324.17 & 7.57 & 8.32 & 12.89 & 20.77 & 0.56 & 14.14 & 0.98 & 0.31 & 0.06 \\
\hline
 nesta\_case162\_ieee\_dtc\_\_sad & 4369.19 & 3.65 & 6.91 & 7.08 & 18.13 & 0.81 & 39.71 & 1.70 & 0.36 & 0.05 \\
\hline
 nesta\_case189\_edin\_\_sad & 914.61 & 1.20$^\star$ & 2.22 & 2.25 & n.a. & 0.65 & 14.83 & 1.27 & 0.46 & n.a. \\
\hline
 nesta\_case300\_ieee\_\_sad & 16910.23 & 0.13 & 1.16 & 1.26 & n.a. & 1.01 & 29.63 & 2.81 & 0.76 & n.a. \\
\hline
 nesta\_case2224\_edin\_\_sad & 38385.14 & 1.22 & 5.57 & 6.18 & 9.06 & 11.53 & 691.53 & 50.34 & 65.68 & 0.33 \\
\hline
 nesta\_case2383wp\_mp\_\_sad & 1935308.12 & 1.30 & 2.97 & 4.00 & 8.62 & 16.25 & 1785.26 & 40.71 & 12.57 & 0.80 \\
\hline
 nesta\_case2736sp\_mp\_\_sad & 1337042.77 & 2.18$^\star$ & 2.01 & 2.34 & 4.56 & 13.22 & 1737.25 & 35.42 & 11.31 & 0.48 \\
\hline
 nesta\_case2737sop\_mp\_\_sad & 795429.36 & 2.24$^\star$ & 2.21 & 2.42 & 3.95 & 13.01 & 2153.37 & 32.05 & 9.69 & 0.39 \\
\hline
 nesta\_case2746wp\_mp\_\_sad & 1672150.46 & 2.41$^\star$ & 1.83 & 2.44 & 5.43 & 14.01 & 2840.32 & 35.66 & 13.32 & 0.56 \\
\hline
 nesta\_case2746wop\_mp\_\_sad & 1241955.30 & 2.71$^\star$ & 2.48 & 2.94 & 5.14 & 14.51 & 2306.18 & 32.41 & 23.22 & 0.42 \\
\hline
 nesta\_case3012wp\_mp\_\_sad & 2635451.29 & --- & 1.92 & 2.12 & n.a. & 15.79 & 13548.13$^\dagger$ & 46.59 & 28.41 & n.a. \\
\hline
 nesta\_case3120sp\_mp\_\_sad & 2203807.23 & --- & 2.56 & 2.79 & n.a. & 30.01 & 16804.55$^\dagger$ & 53.81 & 15.69 & n.a. \\
\hline
 nesta\_case9241\_pegase\_\_sad & 315932.06 & --- & 0.80 & 1.75 & n.a. & 80.30 & --- & 3531.62 & 33437.86 & n.a. \\
\hline
\end{tabular}\\
\vspace{0.1cm}
{\bf bold} - the relaxation provided a feasible AC power flow, $\star$ - solver reported numerical accuracy warnings,  ---,$\dagger$ - iteration or memory limit
\label{tbl:gaps_time}
\end{table*}

\subsection{The Quality of the Relaxations}

The first six columns of Table \ref{tbl:gaps_time} present the
optimality gaps for each of the relaxations on the 53 challenging
NESTA test cases.  Note that bold values indicate cases where 
the relaxation produced a solution to the non-convex AC power flow 
problem.  The table illustrates that this is a rare occurrence in the cases 
considered.


\paragraph*{The SDP Relaxation}

Overall, the SDP relaxation tends to be the tightest, often featuring
optimality gaps below 1.0\%.  In 5 of the 53 cases, the SDP relaxation
even produces a feasible AC power flow solution (as first observed in
\cite{5971792}).  However, with six notable cases where the gap is
above 5\%, it is clear that small gaps are not guaranteed. In some cases,
the optimality gap can be as large as 30\%. 

A significant issue with the SDP relaxation is the reliability of the
solving technology.  Even after applying the solver modifications
suggested in \cite{opfBranchDecompImpl}, the solver fails to converge
to a solution before hitting the default iteration limit on 8 of the
53 test cases shown, it reports numerical accuracy warnings on 4 of the test cases,
and ran out of memory on the 3 test cases with more 9000 nodes.

\paragraph*{The QC and SOC Relaxations}

As suggested by the theoretical study in Section \ref{sec:relations},
when the phase angle difference bounds are large, the QC
relaxation is quite similar to the SOC relaxation.  However, when the
phase angle difference bounds are tight (e.g., in the SAD cases), the
QC relaxation has significant benefits over the SOC relaxation.  On
average, the SDP relaxation dominates the QC and SOC relaxations.
However, there are several notable cases
(e.g. nesta\_case24\_ieee\_rts\_\_sad, nesta\_case29\_edin\_\_sad,
nesta\_case73\_ieee\_rts\_\_sad) where the QC relaxation dominates the
SDP relaxation.

\paragraph*{The Copper Plate (CP) Relaxation}
This relaxation indicates the cost of supplying power 
to the loads when there are no line losses or network constraints \cite{nfcp_report}, 
and is included in the table as a point of reference.
Note that this relaxation cannot be applied to networks containing 
lines with negative resistance or impedance, as indicated by ``n.a.''.

\subsection{The Performance of the Relaxations}

Detailed runtime results for the heuristic solution method and the
relaxations are presented in the last four columns of Table
\ref{tbl:gaps_time}.  The AC heuristic is fast, often taking less than
1 second on test cases with less than 1000 buses.  The SOC relaxation
most often has very similar performance to the AC heuristic.  The
additional constraints in the QC relaxation add a factor 2--5 on top
of the SOC relaxation.  In contrast to these other methods, the SDP
relaxation stands out, taking 10--100 times longer.  It is interesting
to observe, in the 5 cases where the SDP relaxation finds an
AC-feasible solution, the heuristic finds a solution of equal quality
in a fraction of the time.  Focusing on the test cases where the SDP
fails to converge, we observe that the failure occurs after several
minutes of computation, further emphasizing the reliability issue.

\section{Conclusion}
\label{sec:conclusion}

This paper compared the QC relaxation of the power flow equations with
the well-understood SDP and SOC relaxations both theoretically and
experimentally. Its two main contributions are as follows:

\begin{enumerate}
\item The QC relaxation is stronger than the SOC relaxation and
  neither dominates nor is dominated by the SDP relaxation.

\item Computational results on optimal power flow show that the QC
  relaxation may bring significant benefits in accuracy over the SOC
  relaxation, especially for tight bounds on phase angle differences,
  for a reasonable loss in efficiency. In addition, they show that,
  with existing solvers, the SOC and QC relaxations are significantly
  faster and more reliable than the SDP relaxation.
\end{enumerate}

\noindent
There are two natural frontiers for future work on these relaxations;
One is to utilize these relaxations in power system applications that
are modeled as mixed-integer nonlinear optimization problems, such as the
Optimal Transmission Switching, Unit Commitment, 
or Transmission Network Expansion Planning. 
Indeed, Mixed-Integer Quadratic Programming solvers
are already being used to extend these relaxations to richer power
system applications \cite{QCarchive, pscc_ots, 6407493, pscc_dsr,
  6308747}.  The other frontier is to develop novel methods for
closing the significant optimality gaps that remain on a variety of
test cases considered here.

\section*{Acknowledgements}

The authors would like to thank the four anonymous reviewers for their insightful suggestions for improving this work.
NICTA is funded by the Australian Government through the Department of
Communications and the Australian Research Council through the ICT
Centre of Excellence Program.

\bibliographystyle{IEEEtran}
\bibliography{../../power_models}

\begin{IEEEbiographynophoto}{Carleton Coffrin}
received a B.Sc. in Computer Science and a B.F.A. in Theatrical 
Design from the University of Connecticut, Storrs, CT and a
M.S. and Ph.D. from Brown University, Providence, RI.  He is currently a
staff researcher at National ICT Australia where he studies the application 
of optimization methods to problems in power systems.
\end{IEEEbiographynophoto}

\begin{IEEEbiographynophoto}{Hassan L. Hijazi}
received a Ph.D. in Computer Science from AIX-Marseille University while working at Orange Labs - France Telecom R\&D from 2007 to 2010. He then joined the Optimization Group at the Computer Science Laboratory of the Ecole Polytechnique-France where he stayed until late 2012. He is currently a senior research scientist at National ICT Australia and a senior lecturer at the Australian National University.  His main field of research is mixed-integer nonlinear optimization and applications in network-based problems, where he has given contributions both in theory and practice.
\end{IEEEbiographynophoto}

\begin{IEEEbiographynophoto}{Pascal Van Hentenryck}
received the undergraduate and Ph.D. degrees from the University of Namur, Namur, Belgium.
He currently leads the Optimization Research Group at National ICT Australia and holds the Vice-Chancellor Chair in data-intensive computing at the Australian National University, Canberra, Australia. 
Prior to that, he was a Professor with Brown University, Providence, RI, USA. His current research interests are in optimization with applications to disaster management, power systems, and transportation.
\end{IEEEbiographynophoto}


\appendix

\section*{Proof of Lemma \ref{lemma:properties}}
\label{appendix-proof}

\begin{proof} $\;$ \\
Property 1 -- the absolute square of power --
%
\begin{align}
& |S_{ij}|^2 = |\bm Y_{ij}|^2 \left( W_{ii}^2 - W_{ii} W_{ij} - W_{ii} W^*_{ij} + |W_{ij}|^2 \right) \nonumber
\end{align}
is derived using the following steps:
\begin{subequations}
\begin{align}
& S_{ij} = \bm Y^*_{ij} W_{ii} - \bm Y^*_{ij} W_{ij} \nonumber \\
& S_{ij} S^*_{ij} = (\bm Y^*_{ij} W_{ii} - \bm Y^*_{ij} W_{ij})(\bm Y_{ij} W^*_{ii} - \bm Y_{ij} W^*_{ij}) \nonumber \\
& |S_{ij}|^2 = |\bm Y_{ij}|^2 W_{ii}W^*_{ii} - |\bm Y_{ij}|^2 W^*_{ii} W_{ij} - |\bm Y_{ij}|^2 W_{ii} W^*_{ij} \nonumber \\ 
& + |\bm Y_{ij}|^2 W_{ij} W^*_{ij} \nonumber \\
& |S_{ij}|^2 = |\bm Y_{ij}|^2 \left( W_{ii}^2 - W_{ii} W_{ij} - W_{ii} W^*_{ij} + |W_{ij}|^2 \right). \nonumber
\end{align}
\end{subequations}
\\
\noindent
Property 2 -- the absolute square of the voltage product --
\begin{align}
& |W_{ij}|^2 = W_{ii}^2 - W^*_{ii} \bm Z^*_{ij} S_{ij} - W_{ii} \bm Z_{ij} S^*_{ij} + |\bm Z_{ij}|^2 |S_{ij}|^2 \nonumber
\end{align}
is derived using the following steps:
\begin{subequations}
\begin{align}
& S_{ij} = \bm Y^*_{ij} W_{ii} - \bm Y^*_{ij} W_{ij} \nonumber \\
& W_{ij} = W_{ii} - \bm Z^*_{ij} S_{ij} \nonumber \\
& W_{ij} W^*_{ij} = (W_{ii} - \bm Z^*_{ij} S_{ij})(W^*_{ii} - \bm Z_{ij} S^*_{ij}) \nonumber \\
& |W_{ij}|^2 = W_{ii}^2 - W^*_{ii} \bm Z^*_{ij} S_{ij} - W_{ii} \bm Z_{ij} S^*_{ij} + |\bm Z_{ij}|^2 |S_{ij}|^2. \nonumber
\end{align}
\end{subequations}
\\
\noindent
Property 3 -- the absolute square of current --
\begin{align}
& l_{ij} = |\bm Y_{ij}|^2 (W_{ii} + W_{jj} - W_{ij} - W^*_{ij})  \nonumber
\end{align}
is derived using the following steps:
\begin{subequations}
\begin{align}
&  I_{ij} = \bm Y_{ij} (V_i - V_j) \nonumber \\
&  I_{ij}I^*_{ij} = \bm Y_{ij} (V_i - V_j) \bm Y^*_{ij} (V^*_i - V^*_j) \nonumber \\
&  |I_{ij}|^2 = |\bm Y_{ij}|^2(V_iV^*_i - V_iV^*_j - V^*_iV_j + V_jV^*_j) \nonumber \\
&  l_{ij} = |\bm Y_{ij}|^2(W_{ii} - W_{ij} - W^*_{ij} + W_{jj}). \nonumber
%
\end{align}
\end{subequations}
\\
\noindent
Property 4 -- voltage drop --
\begin{align}
& W_{jj} = W_{ii} - \bm Z^*_{ij} S_{ij} - \bm Z_{ij} S^*_{ij} + |\bm Z_{ij}|^2 l_{ij} \nonumber
\end{align}
is derived using the following steps:
\begin{subequations}
\begin{align}
& W_{jj} = W_{jj} \nonumber \\
& W_{jj} = W_{ii} - W_{ii} + W_{ij}  - W_{ii} + W^*_{ij} \nonumber \\ 
& + W_{ii} - W_{ij} - W^*_{ij} + W_{jj} \nonumber \\
& W_{jj} = W_{ii} - W_{ii} + W_{ij}  - W_{ii} + W^*_{ij} + |\bm Z_{ij}|^2 l_{ij} \nonumber \\
& W_{jj} = W_{ii} - \bm Z^*_{ij} S_{ij} - \bm Z_{ij} S^*_{ij} + |\bm Z_{ij}|^2 l_{ij}. \nonumber
\end{align}
\end{subequations}
Note that property 3 is used in the second step.
\end{proof}

\section*{Extensions for Transmission System Test Cases}
\label{sec:extensions}

In the interest of clarity, properties of AC Power Flow, and their
relaxations, are most often presented on the purest version of the AC
power flow equations.  However, transmission system test cases include
additional parameters such as bus shunts, line charging, and
transformers, which complicate the AC power flow equations
significantly.  Model \ref{model:ac_opf_w_ext} presents the AC Optimal
Power Flow problem (similar to Model \ref{model:ac_opf_w}) with these
extensions.  In the rest of this section, we show that the results of
Section \ref{sec:qc_alt} continue to hold in this extended power flow
model.

\begin{model}[t]
\caption{ AC-OPF-W with Extensions}
\label{model:ac_opf_w_ext}
\begin{subequations}
\vspace{-0.2cm}
\begin{align}
\mbox{\bf variables: } & S^g_i (\forall i\in N), \; W_{ij} (\forall i,j \in N) \nonumber \\
%
\mbox{\bf minimize: } & \eqref{w_obj} \\
\mbox{\bf subject to: } &  \mbox{\eqref{w_2}--\eqref{w_4}, \eqref{w_8}--\eqref{w_9}} \nonumber \\
%
%
& \hspace{-1.5cm} S^g_i - {\bm S^d_i} - \bm Y^s_{i} W_{ii} = \sum_{\substack{(i,j)\in E \cup E^R}} S_{ij} \;\; \forall i\in N \label{w2_1} \\ 
& \hspace{-1.5cm} S_{ij} = \left( \bm Y^*_{ij} - \bm i\frac{\bm {b^c}_{ij}}{2} \right) \frac{W_{ii}}{|\bm{T}_{ij}|^2} - \bm Y^*_{ij} \frac{W_{ij}}{\bm{T}^*_{ij}} \;\; (i,j)\in E \label{w2_2}\\
& \hspace{-1.5cm} S_{ji} = \left( \bm Y^*_{ij} - \bm i\frac{\bm {b^c}_{ij}}{2} \right) W_{jj} - \bm Y^*_{ij} \frac{W^*_{ij}}{\bm{T}_{ij}} \;\; (i,j)\in E \label{w2_3}
\end{align}
\end{subequations}
\end{model}

\subsubsection*{The Two SOC Formulations}

In this extended power flow formulation, the second-order cone
constraint based on the absolute square of the voltage product
\cite{Jabr06} remains the same, i.e.,
\begin{align}
& |W_{ij}|^2 \leq W_{ii}W_{jj}  
\end{align}
However, the constraint based on the absolute square of the power flow
\cite{6102366} is updated to include the transformer tap ratio as
follows:
\begin{align}
& |S_{ij}|^2 \leq \frac{W_{ii}}{|\bm{T}_{ij}|^2}l_{ij}
\end{align}
and the power loss constraint \eqref{qc_4} is updated to
\begin{align}
& S_{ij} + S_{ji} = \bm Z_{ij} \left( l_{ij} + \left( \frac{\bm {b^c}_{ij}}{2} \right)^{\!2} \! \frac{W_{ii}}{|\bm T_{ij}|^2} + \bm {b^c}_{ij} \Im(S_{ij}) \right) \nonumber \\ 
& - \bm i \frac{\bm {b^c}_{ij}}{2} \left( \frac{W_{ii}}{|\bm T_{ij}|^2} + W_{jj} \right) \label{eq:loss_ext} 
\end{align}

\noindent
We now show that, when the power flow equations
\eqref{w2_2}--\eqref{w2_3} are present in the model, these two
versions of the second-order cone constraints are also equivalent,
i.e.,
\begin{equation}
\label{soc_w_e} \tag{W-E-SOC}
\begin{array}{ll}
& S_{ij} = \left( \bm Y^*_{ij} - \bm i\frac{\bm {b^c}_{ij}}{2} \right) \frac{W_{ii}}{|\bm{T}_{ij}|^2} - \bm Y^*_{ij} \frac{W_{ij}}{\bm{T}^*_{ij}} \;\; (i,j) \in E \\
& S_{ji} = \left( \bm Y^*_{ij} - \bm i\frac{\bm {b^c}_{ij}}{2} \right) W_{jj} - \bm Y^*_{ij} \frac{W^*_{ij}}{\bm{T}_{ij}} \;\; (i,j)\in E \\
& |W_{ij}|^2 \leq W_{ii}W_{jj} \;\; (i,j)\in E  \\
\end{array}
\end{equation}
is equivalent to
\begin{equation}
\label{soc_c_e} \tag{C-E-SOC}
\begin{array}{ll}
& S_{ij} = \left( \bm Y^*_{ij} - \bm i\frac{\bm {b^c}_{ij}}{2} \right) \frac{W_{ii}}{|\bm{T}_{ij}|^2} - \bm Y^*_{ij} \frac{W_{ij}}{\bm{T}^*_{ij}} \;\; (i,j) \in E \\
& S_{ji} = \left( \bm Y^*_{ij} - \bm i\frac{\bm {b^c}_{ij}}{2} \right) W_{jj} - \bm Y^*_{ij} \frac{W^*_{ij}}{\bm{T}_{ij}} \;\; (i,j)\in E \\
& S_{ij} + S_{ji} = \bm Z_{ij} \left( l_{ij} + \left( \frac{\bm {b^c}_{ij}}{2} \right)^{\!2} \! \frac{W_{ii}}{|\bm T_{ij}|^2} + \bm {b^c}_{ij} \Im(S_{ij}) \right) \\ 
& - \bm i \frac{\bm {b^c}_{ij}}{2} \left( \frac{W_{ii}}{|\bm T_{ij}|^2} + W_{jj} \right) \;\; (i,j) \in E \\
& |S_{ij}|^2 \leq \frac{W_{ii}}{|\bm T_{ij}|^2} l_{ij} \;\; (i,j)\in E.
\end{array}
\end{equation}

We begin by redeveloping the properties of Lemma \ref{lemma:properties} in the extended model.

\subsubsection*{The Equalities}

As both models contain constraints \eqref{w2_2}--\eqref{w2_3}, the
properties arising from these equations can be transferred between
both models.  

\begin{proof} $\;$ \\
\noindent
Property 1 -- the absolute square of power --
%
\begin{align}
& |S_{ij}|^2 = |\bm Y_{ij}|^2 \left( 
\frac{W_{ii}^2}{|\bm T_{ij}|^4} 
- \frac{W_{ii}}{|\bm T_{ij}|^2} \frac{W_{ij}}{\bm T^*_{ij}}
- \frac{W_{ii}}{|\bm T_{ij}|^2} \frac{W^*_{ij}}{\bm T_{ij}}
+ \frac{|W_{ij}|^2}{|\bm T_{ij}|^2} 
\right) \nonumber \\ 
& - \left( \frac{\bm {b^c}_{ij}}{2} \right)^{\!2} \! \frac{W_{ii}^2}{|\bm T_{ij}|^4} 
- \bm {b^c}_{ij} \frac{W_{ii}}{|\bm T_{ij}|^2} \Im(S_{ij})
\end{align}
The derivation follows similarly to the one presented earlier and the details are left to the reader.  The only delicate point is to observe that three separate terms in the initial expansion can be collected into $\Im(S_{ij})$. \\

%
\noindent
Property 2 -- the absolute square of the voltage product --
\begin{align}
& |W_{ij}|^2 = 
(1- \bm {b^c}_{ij} \Im(\bm Z_{ij})) \frac{W_{ii}^2}{|\bm T_{ij}|^2} 
- W^*_{ii} \bm Z^*_{ij} S_{ij} 
- W_{ii} \bm Z_{ij} S^*_{ij} \nonumber \\
& + |\bm Z_{ij}|^2 \left( 
   |\bm T_{ij}|^2 |S_{ij}|^2 
   + \left( \frac{\bm {b^c}_{ij}}{2} \right)^{\!2} \! \frac{W_{ii}^2}{|\bm T_{ij}|^2} 
   + W_{ii} \bm {b^c}_{ij} \Im(S_{ij})
\right)
\end{align}
The derivation follows similarly to the one presented earlier and the details are left to the reader. \\

\noindent
Property 3 -- the absolute square of current --
\begin{align}
& l_{ij} = |\bm Y_{ij}|^2 \left( \frac{W_{ii}}{|\bm T_{ij}|^2} - \frac{W_{ij}}{\bm T^*_{ij}} - \frac{W^*_{ij}}{\bm T_{ij}} + W_{jj}  \right) 
   \nonumber \\
   & - \left( \frac{\bm {b^c}_{ij}}{2} \right)^{\!2} \! \frac{W_{ii}}{|\bm T_{ij}|^2} 
   - \bm {b^c}_{ij} \Im(S_{ij})
\end{align}
After observing that the extension of Ohm's Law in this model is given by
\begin{align}
& I_{ij} = \left( \bm Y_{ij} + \bm i \frac{\bm {b^c}_{ij}}{2} \right) \frac{V_i}{\bm T_{ij}} - \bm Y_{ij}V_j  \;\; (i,j)\in E,
\end{align}
the derivation follows similarly to the one presented earlier and the details are left to the reader. \\

\noindent
Property 4 -- voltage drop --
\begin{align}
& W_{jj} = 
 (1- \bm {b^c}_{ij} \Im(\bm Z_{ij})) \frac{W_{ii}}{|\bm T_{ij}|^2} 
 -  \bm Z^*_{ij} S_{ij} - \bm Z_{ij} S^*_{ij} 
 \nonumber \\ 
& + |\bm Z_{ij}|^2 \left( 
   l_{ij} 
   + \left( \frac{\bm {b^c}_{ij}}{2} \right)^{\!2} \! \frac{W_{ii}}{|\bm T_{ij}|^2} 
   + \bm {b^c}_{ij} \Im(S_{ij})
\right)
\end{align}
The proof follows similarly to the one presented earlier and the
details are left to the reader.
\end{proof}

\noindent
With these core properties updated, we are now ready to extend the proof from Section \ref{sec:qc_alt}. 

\begin{theorem}
\label{theorem:main}
\eqref{soc_c_e} is equivalent to \eqref{soc_w_e}.
\end{theorem}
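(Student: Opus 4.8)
The plan is to replay the proof of Theorem~\ref{theorem:main} essentially line for line, with the four equalities of Lemma~\ref{lemma:properties} replaced by their extended counterparts derived just above. The skeleton stays the same: I would prove \ref{soc_w_e} $\Rightarrow$ \ref{soc_c_e} and \ref{soc_c_e} $\Rightarrow$ \ref{soc_w_e} separately, in each direction first exhibiting an explicit value for the auxiliary variable $l_{ij}$ and then chaining the relevant equalities using a single application of the inequality assumed in the hypothesis.

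For the direction \ref{soc_w_e} $\Rightarrow$ \ref{soc_c_e}, I would take a feasible point of \ref{soc_w_e} and set $l_{ij}$ equal to the right-hand side of the extended Property~3. The first task is to verify the loss equation~\eqref{eq:loss_ext}: expanding $S_{ij}+S_{ji}$ via \eqref{w2_2}--\eqref{w2_3}, factoring out $\bm Y^*_{ij} = \bm Z_{ij}|\bm Y_{ij}|^2$, and substituting the chosen $l_{ij}$ should make this identity immediate. Then, starting from the extended Property~1, I would replace the term $\frac{|W_{ij}|^2}{|\bm T_{ij}|^2}$ by its upper bound $\frac{W_{ii}W_{jj}}{|\bm T_{ij}|^2}$ using $|W_{ij}|^2 \le W_{ii}W_{jj}$, pull out the common factor $\frac{W_{ii}}{|\bm T_{ij}|^2}$, and recognise the bracketed expression as the right-hand side of the extended Property~3, i.e., as $l_{ij}$; this delivers $|S_{ij}|^2 \le \frac{W_{ii}}{|\bm T_{ij}|^2} l_{ij}$. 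The factoring step is legitimate because $W_{ii}\ge 0$ by~\eqref{w_3}.

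For the converse, \ref{soc_c_e} $\Rightarrow$ \ref{soc_w_e}, I would take a feasible point of \ref{soc_c_e} and start from the extended Property~2, replace $|\bm T_{ij}|^2 |S_{ij}|^2$ by $W_{ii} l_{ij}$ using the cone constraint $|S_{ij}|^2 \le \frac{W_{ii}}{|\bm T_{ij}|^2} l_{ij}$, use $W_{ii} = W^*_{ii}$ to pull out $W_{ii}$, and identify the remaining bracket with the right-hand side of the extended Property~4, namely $W_{jj}$. This yields $|W_{ij}|^2 \le W_{ii} W_{jj}$ and finishes the argument.

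The algebra buried inside the extended equalities is routine (the paper already leaves it to the reader), so the main obstacle I expect is purely the bookkeeping of the $|\bm T_{ij}|$ and $\bm {b^c}_{ij}$ terms: one must check that after substituting each bound the factor pulled out is exactly $\frac{W_{ii}}{|\bm T_{ij}|^2}$ --- not $W_{ii}$, as in the transformer-free case --- and that the residual expression matches the corresponding extended equality term by term, with no stray conjugation or sign error. Provided the extended Properties~1--4 are stated correctly, this matching is forced.
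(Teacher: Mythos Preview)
Your proposal is correct and matches the paper's own proof essentially step for step: define $l_{ij}$ via the extended Property~3 and use Properties~1 and~3 for the forward direction, then use Properties~2 and~4 together with $W_{ii}=W_{ii}^*$ for the converse. The only minor slip is in your overview sentence, where you say you exhibit a value for $l_{ij}$ ``in each direction'' --- in the \ref{soc_c_e}\,$\Rightarrow$\,\ref{soc_w_e} direction $l_{ij}$ is already part of the given feasible point, and indeed your detailed description of that direction handles this correctly.
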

\begin{proof}

The proof follows the one presented in Section \ref{sec:qc_alt}.

\paragraph{\ref{soc_w_e} $\Rightarrow$ \ref{soc_c_e}} Every solution to \eqref{soc_w_e} is a solution to \eqref{soc_c_e}.
Given a solution to \eqref{soc_w}, by equality (3), we assign $l_{ij}$ as follows:
\begin{align}
& l_{ij} = |\bm Y_{ij}|^2 \left( \frac{W_{ii}}{|\bm T_{ij}|^2} - \frac{W_{ij}}{\bm T^*_{ij}} - \frac{ W^*_{ij}}{\bm T_{ij}} +  W_{jj}  \right) 
   \nonumber \\ 
   & - \left( \frac{\bm {b^c}_{ij}}{2} \right)^{\!2} \! \frac{ W_{ii}}{|\bm T_{ij}|^2} 
   - \bm {b^c}_{ij} \Im( S_{ij}) \;\; (i,j) \in E 
\end{align}
This assignment satisfies the power loss constraint \eqref{eq:loss_ext} by definition of the power.
It remains to show that second-order cone constraint in \eqref{soc_c_e} is satisfied. Using equalities (1) and (3), we obtain
\begin{subequations}
\begin{align}
& |S_{ij}|^2 = |\bm Y_{ij}|^2 \left( 
\frac{ W_{ii}^2}{|\bm T_{ij}|^4} 
- \frac{ W_{ii}}{|\bm T_{ij}|^2} \frac{ W_{ij}}{\bm T^*_{ij}}
- \frac{ W_{ii}}{|\bm T_{ij}|^2} \frac{ W^*_{ij}}{\bm T_{ij}}
+ \frac{| W_{ij}|^2}{|\bm T_{ij}|^2} 
\right) \nonumber \\
& - \left( \frac{\bm {b^c}_{ij}}{2} \right)^{\!2} \! \frac{ W_{ii}^2}{|\bm T_{ij}|^4} 
- \bm {b^c}_{ij} \frac{ W_{ii}}{|\bm T_{ij}|^2} \Im( S_{ij}) \\
& | S_{ij}|^2 \leq |\bm Y_{ij}|^2 \left( 
\frac{ W_{ii}^2}{|\bm T_{ij}|^4} 
\!-\! \frac{ W_{ii}}{|\bm T_{ij}|^2} \frac{ W_{ij}}{\bm T^*_{ij}}
\!-\! \frac{ W_{ii}}{|\bm T_{ij}|^2} \frac{ W^*_{ij}}{\bm T_{ij}}
\!+\! \frac{ W_{ii}  W_{jj}}{|\bm T_{ij}|^2} 
\right) \nonumber \\
& - \left( \frac{\bm {b^c}_{ij}}{2} \right)^{\!2} \! \frac{ W_{ii}^2}{|\bm T_{ij}|^4} 
- \bm {b^c}_{ij} \frac{ W_{ii}}{|\bm T_{ij}|^2} \Im( S_{ij}) \\
& | S_{ij}|^2 \leq \frac{ W_{ii}}{|\bm T_{ij}|^2} \left( |\bm Y_{ij}|^2 \left( 
\frac{ W_{ii}}{|\bm T_{ij}|^2} 
- \frac{ W_{ij}}{\bm T^*_{ij}}
- \frac{ W^*_{ij}}{\bm T_{ij}}
+  W_{jj}
\right) \right) \nonumber \\
& - \frac{ W_{ii}}{|\bm T_{ij}|^2} \left( \left( \frac{\bm {b^c}_{ij}}{2} \right)^{\!2} \! \frac{ W_{ii}}{|\bm T_{ij}|^2} 
- \bm {b^c}_{ij} \Im( S_{ij}) \right) \\
& | S_{ij}|^2 \leq \frac{ W_{ii}}{|\bm T_{ij}|^2}   l_{ij}.
\end{align}
\end{subequations}

\paragraph{\ref{soc_c_e} $\Rightarrow$ \ref{soc_w_e}} Every solution to \eqref{soc_c_e} is a solution to \eqref{soc_w_e}. 
We show that the values of $W_{ij}$ in \eqref{soc_c_e} satisfy the
second-order cone constraint in \eqref{soc_w_e}. Using equalities (2) and (4) and the fact
that $W_{ii} = W^*_{ii}$ since $W_{ii}$ is a real number, we have
\begin{subequations}
\begin{align}
& | W_{ij}|^2 = 
(1- \bm {b^c}_{ij} \Im(\bm Z_{ij})) \frac{ W_{ii}^2}{|\bm T_{ij}|^2} 
-  W^*_{ii} \bm Z^*_{ij}  S_{ij} 
-  W_{ii} \bm Z_{ij}  S^*_{ij} 
\nonumber \\
& + |\bm Z_{ij}|^2 \left( 
   |\bm T_{ij}|^2 | S_{ij}|^2 
   + \left( \frac{\bm {b^c}_{ij}}{2} \right)^{\!2} \! \frac{ W_{ii}^2}{|\bm T_{ij}|^2} 
   +  W_{ii} \bm {b^c}_{ij} \Im( S_{ij})
\right) \\
& | W_{ij}|^2 \leq 
(1- \bm {b^c}_{ij} \Im(\bm Z_{ij})) \frac{ W_{ii}^2}{|\bm T_{ij}|^2} 
-  W^*_{ii} \bm Z^*_{ij}  S_{ij} 
-  W_{ii} \bm Z_{ij}  S^*_{ij} 
\nonumber \\
& + |\bm Z_{ij}|^2 \left( 
     W_{ii}  l_{ij} 
   + \left( \frac{\bm {b^c}_{ij}}{2} \right)^{\!2} \! \frac{ W_{ii}^2}{|\bm T_{ij}|^2} 
   +  W_{ii} \bm {b^c}_{ij} \Im( S_{ij})
\right) \\
& | W_{ij}|^2 \leq  W_{ii} \left(
(1- \bm {b^c}_{ij} \Im(\bm Z_{ij})) \frac{ W_{ii}}{|\bm T_{ij}|^2} 
-  \bm Z^*_{ij}  S_{ij} 
-  \bm Z_{ij}  S^*_{ij}
\right) \nonumber \\ 
& +  W_{ii} \left( |\bm Z_{ij}|^2 \left( 
     l_{ij} 
   + \left( \frac{\bm {b^c}_{ij}}{2} \right)^{\!2} \! \frac{ W_{ii}}{|\bm T_{ij}|^2} 
   + \bm {b^c}_{ij} \Im( S_{ij})
\right) 
\right) \\
& | W_{ij}|^2 \leq  W_{ii}  W_{jj} 
%
\end{align}
\end{subequations}
and the result follows.
\end{proof}

\begin{corollary}
Model \ref{model:ac_opf_c_qc} is equivalent to Model \ref{model:ac_opf_w_qc} in the extended AC Power Flow formulation from Model \ref{model:ac_opf_w_ext}.
\end{corollary}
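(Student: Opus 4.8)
The plan is to derive this corollary from the extended equivalence \eqref{soc_c_e} $\equiv$ \eqref{soc_w_e} just established, mirroring the main-text argument by which Corollary \ref{corollary:model} followed from the non-extended equivalence. First I would verify that, once both relaxations are instantiated with the extended power flow equations of Model \ref{model:ac_opf_w_ext}, Models \ref{model:ac_opf_c_qc} and \ref{model:ac_opf_w_qc} carry exactly the same objective \eqref{w_obj} and exactly the same shared constraints---the voltage and generator bounds, the power balance \eqref{w2_1}, the thermal limits, the PAD constraints \eqref{w_9}, the polar envelopes \eqref{qc_1}--\eqref{qc_3}, and the extended flow equations \eqref{w2_2}--\eqref{w2_3}---and differ only in how the second-order cone restriction is modelled. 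The C-QC model adds the auxiliary variables $l_{ij}$, the updated loss equation \eqref{eq:loss_ext}, and the current cone $|S_{ij}|^2 \le \frac{W_{ii}}{|\bm T_{ij}|^2} l_{ij}$, which together are precisely the constraints of \eqref{soc_c_e}; the W-QC model instead adds the voltage-product cone $|W_{ij}|^2 \le W_{ii} W_{jj}$, which is precisely the content of \eqref{soc_w_e}. Since \eqref{w2_2}--\eqref{w2_3} sit inside both \eqref{soc_c_e} and \eqref{soc_w_e}, any feasible point of either model meets the hypotheses of the extended theorem.

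Next I would invoke that theorem: for every assignment to $S$ and $W$ satisfying \eqref{w2_2}--\eqref{w2_3}, a value of $l_{ij}$ exists making \eqref{soc_c_e} hold if and only if \eqref{soc_w_e} holds, and in the forward direction that value is the one supplied by the extended current identity, which is consistent with the (over-determined) loss equation \eqref{eq:loss_ext} by the very definition of the line powers. Hence an assignment to the shared variables $(S^g_i, W_{ij}, v_i \angle \theta_i)$ is feasible in the extended C-QC model if and only if it is feasible in the extended W-QC model. Because the solution set of a relaxation is, as fixed in Section \ref{sec:relations}, the set of assignments to the $W_{ij}$ variables, the auxiliary $l_{ij}$ are projected out, so the two feasible regions coincide; and since \eqref{w_obj} depends only on the shared $S^g_i$, the two optimization problems share the same optimal value and corresponding optimizers, which is the asserted equivalence.

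I do not expect a genuine obstacle at this stage: the substance is carried entirely by the extended Lemma and by Theorem \ref{theorem:main} above. The one place meriting care in the write-up is the bookkeeping around the auxiliary current variables---one must note explicitly that \eqref{eq:loss_ext} pins $l_{ij}$ down from $S$ and $W$, so that passing from a W-QC solution to a C-QC solution involves no free choice and passing back simply discards $l_{ij}$; this is what licenses treating the two models as having the same solution set in the sense of Section \ref{sec:relations}.
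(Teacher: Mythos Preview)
Your proposal is correct and matches the paper's approach: the paper states this corollary with no proof at all, treating it as an immediate consequence of the extended equivalence \eqref{soc_c_e} $\Leftrightarrow$ \eqref{soc_w_e} just established (exactly as Corollary~\ref{corollary:model} was stated without proof after Theorem~\ref{theorem:main} in the main text). Your write-up simply spells out the bookkeeping---that the two models share all constraints except the SOC block, and that the auxiliary $l_{ij}$ is pinned down by \eqref{eq:loss_ext}---which the paper leaves entirely implicit.
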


\begin{figure}[t!]
\centering
\includegraphics[width=9cm]{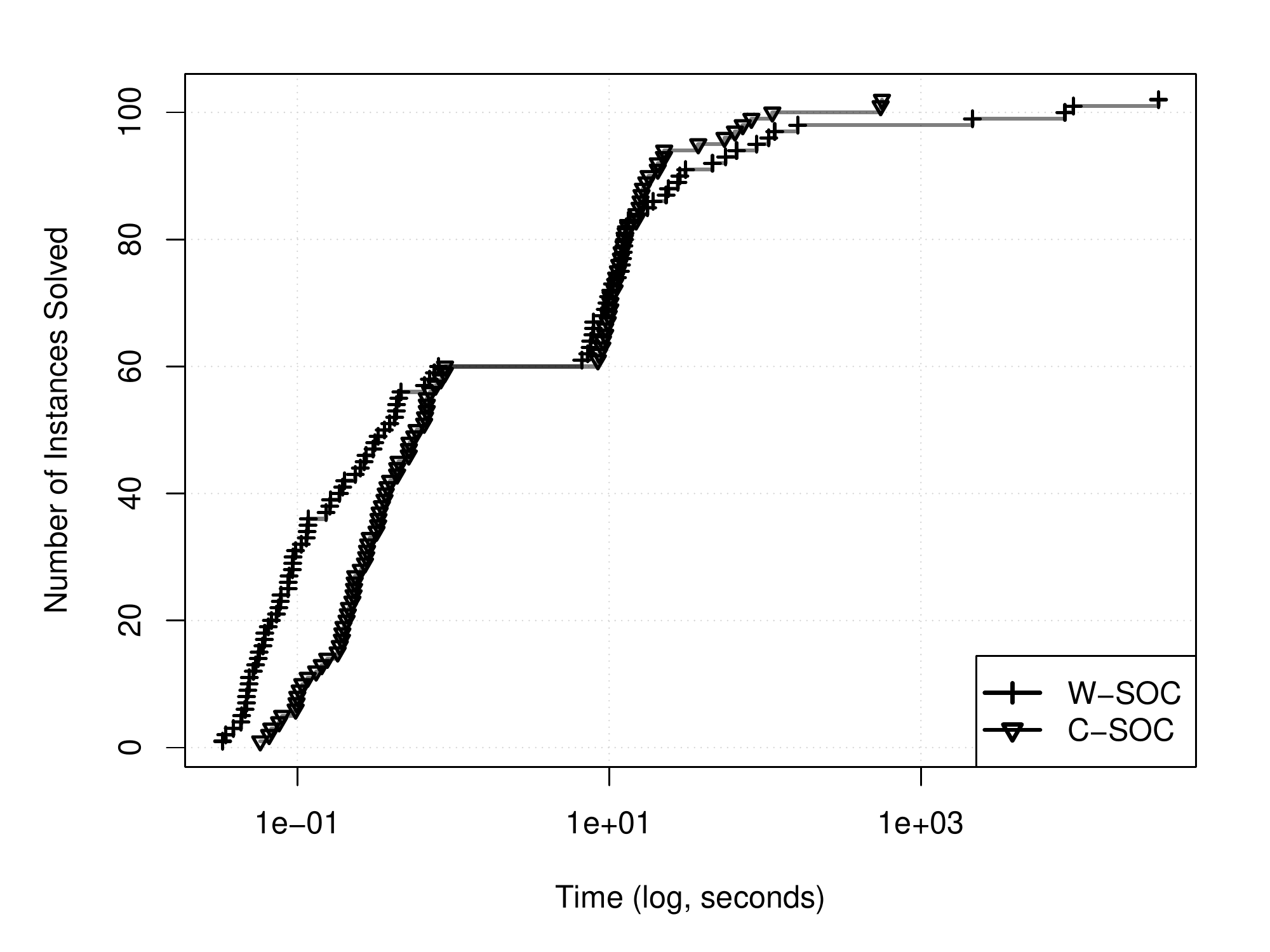}
\vspace{-0.7cm}
\caption{Runtime Profiles for the Two SOC Relaxations.}
\label{fig:soc_comp_soc}
\end{figure}

\begin{figure}[t!]
\centering
\includegraphics[width=9cm]{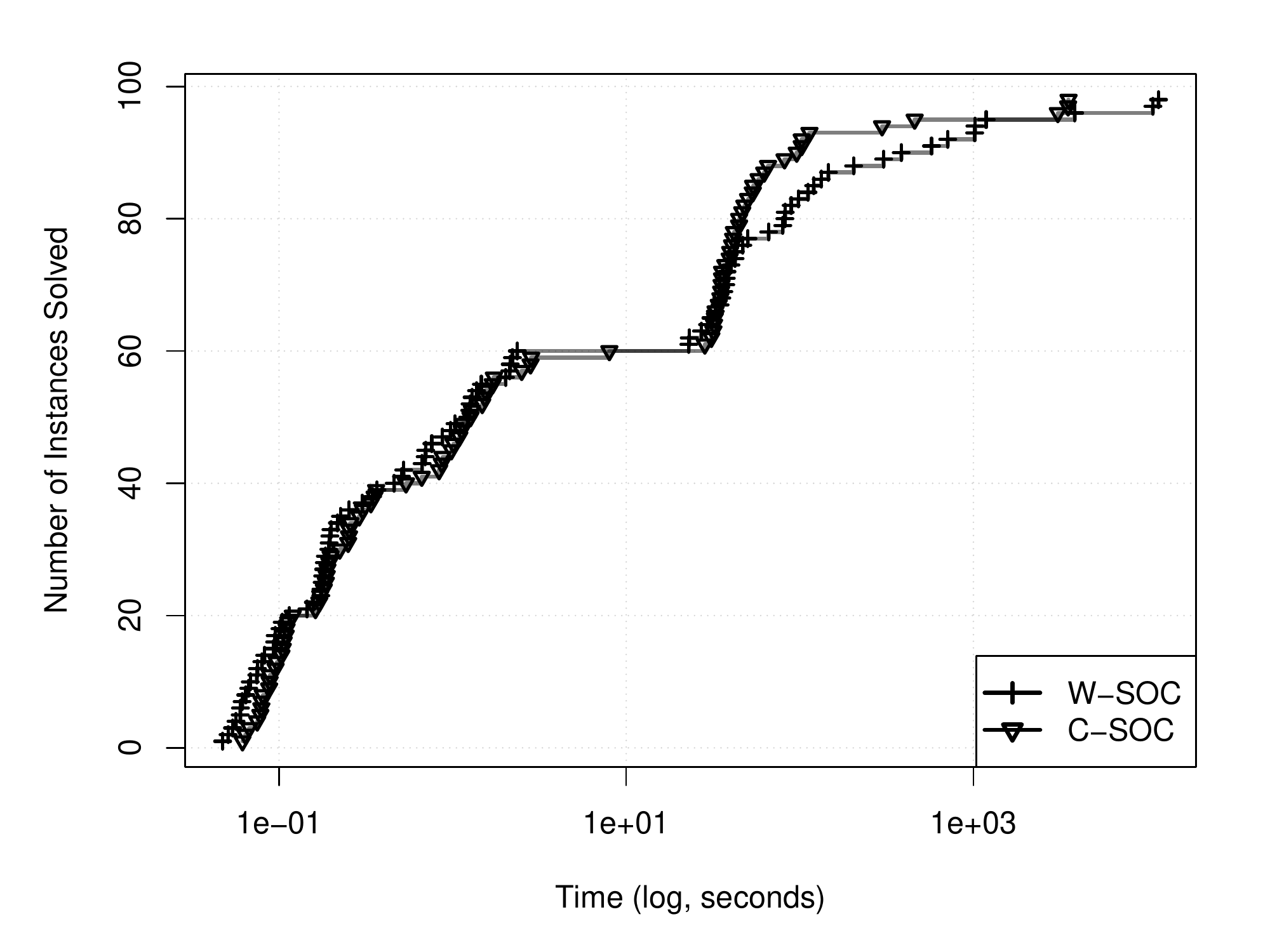}
\vspace{-0.7cm}
\caption{Runtime Profiles for the Two QC Relaxations.} 
\label{fig:soc_comp_qc}
\end{figure}

\section*{Comparison of SOC formulations}
\label{sec:soc_comp}

Section \ref{sec:qc_alt} proposed two equivalent formulations of the
second-order cone constraints for power flow relaxations.  Although
both formulations define the same convex set, it is unclear if they
have the same performance characteristics.  For example, the
current-based constraint \eqref{soc_c} has more constraints and more
variables than the voltage-product constraint \eqref{soc_w}.  All
other aspects being equal, one would expect \eqref{soc_c} to be slower
than \eqref{soc_w}.  This section investigates the performance
implications of these two formulations on both the QC and SOC power
flow relaxations. Four power flow relaxations are considered, W-SOC
(Model \ref{model:ac_opf_w_soc}), C-SOC (Model
\ref{model:ac_opf_w_soc} with \eqref{soc_c}), W-QC (Model
\ref{model:ac_opf_c_qc} with \eqref{soc_w}), and C-QC (Model
\ref{model:ac_opf_c_qc}).  To test the performance of these
relaxations, each model is evaluated on 105 state-of-the-art AC-OPF
transmission system test cases from the NESTA v0.4.0 archive
\cite{nesta}. Figure \ref{fig:soc_comp_soc} compares the two variants of the 
SOC relaxation and Figure \ref{fig:soc_comp_qc} compares two variants on the QC relaxation.


Both figures indicate that the two formulations are very similar for small test cases but, on the larger test cases (i.e., with more than 1000 buses), the C-QC formulation has a faster convergence rate, in IPOPT.  This suggests that, despite its increased size, the C-QC formulation originally
presented in \cite{QCarchive} is preferable from a performance standpoint and that the C-SOC formulation may be preferable on very large networks (e.g. above 9000 buses).


\end{document}